\newtheorem{theorem}{\bf Theorem}[section]
\newtheorem{lemma}[theorem]{\bf Lemma}
\newtheorem{prop}[theorem]{\bf Proposition}
\newenvironment{proof}{\noindent{\em Proof:}}{\quad \hfill$\Box$\vspace{2ex}}
\def \bN {\Bbb N}
\def \bZ {\Bbb Z}
\def \bR {\Bbb R}
\def \bH {\Bbb H}
\def \bU {\Bbb U}
\def \bC {\Bbb C}
\def \bS {\Bbb S}
\def \ba {{\bf a}}
\def \cA {{\cal A}}
\def \cB {{\cal B}}
\def \cH {{\cal H}}
\def \cM {{\cal M}}
\def \cS {{\cal S}}
\def \cQ {{\cal Q}}
\def \ba {{\bf a}}
\def \and {\, \mbox{\rm and}\, }
\def \co {\,{\rm conv}\,}
\def \span {\,{\rm span}\,}
\def \supp {\,{\rm supp}\,}
\def \sgn {\,{\rm sgn}\,}
\def \Re {\,{\rm Re}\,}
\def \ae {\mbox{a.e. }}
\newcommand{\Rmnum}[1]{\expandafter\@slowromancap\romannumeral #1@}
\begin{document}
\title{\bf Multidimensional Analytic Signals and\\ the Bedrosian Identity\thanks{The Project was partially supported by Guangdong Provincial Government of China through the
``Computational Science Innovative Research Team" program.}}
\author{Haizhang Zhang\thanks{School of Mathematics
and Computational Science and Guangdong Province Key Laboratory of Computational Science,
 Sun Yat-sen University, Guangzhou 510275, P. R. China. E-mail address: {\it zhhaizh2@sysu.edu.cn}. Supported in part by Natural Science Foundation of China under grants 11222103, 11101438 and 91130009, and by the US Army Research Office.}}
\date{}

\maketitle
\begin{abstract} The analytic signal method via the Hilbert transform is a key tool in signal analysis and processing, especially in the time-frquency analysis. Imaging and other applications to multidimensional signals call for extension of the method to higher dimensions. We justify the usage of partial Hilbert transforms to define multidimensional analytic signals from both engineering and mathematical perspectives. The important associated Bedrosian identity $T(fg)=fTg$ for partial Hilbert transforms $T$ are then studied. Characterizations and several necessity theorems are established. We also make use of the identity to construct basis functions for the time-frequency analysis.

\noindent{\bf Keywords}: multidimensional analytic signals, partial Hilbert transforms, the Bedrosian identity.

\noindent{\bf 2000 Mathematical Subject Classification:} Primary
65R10, Secondary 44A15.
\end{abstract}
\vspace*{0.3cm}
\section{Introduction}
\setcounter{equation}{0}

Signals are carrier of information. In many applications, feature
information of a signal usually persists in the time-frequency
domain \cite{C}. The analytic signal method \cite{Gabor} is a
classical way of defining without ambiguity the local amplitude and
frequency of one-dimensional signals. Among others, it has been
proven useful in meteorological and atmospheric applications, ocean
engineering, structural science, and imaging processing,
\cite{HA,HS,HZL,OL,P}. Especially, it motivates the widely used
empirical mode decomposition and the Hilbert-Huang transform
\cite{CHRX,HA,Hu,HS}.

The analytic signal method makes use of the Hilbert transform $H$
defined for functions $f\in L^2(\bR)$ as
\begin{equation}\label{hilbert}
(H
f)(x):=\mbox{p.v.}\frac{1}{\pi}\int_{\bR}\frac{f(y)}{x-y}dy:=\lim_{{\varepsilon\rightarrow0^+}}\frac{1}{\pi}\int_{
|y-x|\ge\varepsilon}\frac{f(y)}{x-y}dy,\ \   x\in\bR.
\end{equation}
Let $\bH^1(\bR)$ be the Banach space of all the functions $\phi\in
L^1(\bR)$ such that $H \phi\in L^1(\bR)$ equipped with the norm
$$
\|\phi\|_{\bH^1(\bR)}:=\|\phi\|_{L^1(\bR)}+\|H \phi\|_{L^1(\bR)}.
$$
The Hilbert transform $H f$ of $f\in L^\infty(\bR)$ is defined as a
BMO function that is in the dual of $\bH^1(\bR)$ by
$$
\int_{\bR}(H f) \phi dx=-\int_{\bR}fH\phi dx,\ \ \phi\in
\bH^1(\bR).
$$
The {\it analytic signal} $\cA f$ of a real $f\in L^2(\bR)$ is
formed by adding to it its Hilbert transform as the imaginary part:
\begin{equation}\label{analytic}
\cA f:=f+iH f.
\end{equation}
With the amplitude-phase decomposition
$$
(\cA f)(t)=\rho(t)e^{i\theta(t)},\ \ t\in\bR,
$$
the $\rho(t)$ and $\theta'(t)$ respectively are taken as the
instantaneous amplitude and frequency of the signal $f$ at time $t$.

The Hilbert transform in (\ref{analytic}) has many favorable
mathematical properties that account for the usefulness of the
analytic signal method in the time-frequency analysis. Here we
mention four of them. Let $\bN$ be the set of positive integers and
$\bZ_+:=\bN\cup\{0\}$. For a fixed $d\in\bN$, we denote by
$\|\cdot\|$ and $(\cdot,\cdot)$ the standard Euclidean norm and
inner product on $\bR^d$. The {\it Schwartz class} $\cS(\bR^d)$
consists of infinitely differentiable functions $\varphi$ on $\bR^d$
such that for all $p\in\bZ_+$ and $q\in\bZ_+^d$
$$
\sup\{(1+\|x\|^2)^p|f^{(q)}(x)|:x\in\bR^d\}<+\infty.
$$
The Fourier transform $\hat{\varphi}$ of $\varphi\in\cS(\bR^d)$ is
defined as
$$
\hat{\varphi}(\xi):=\int_{\bR^d}\varphi(x)e^{-i(x,\xi)}dx,\ \
\xi\in\bR^d,
$$
which is again a function in $\cS(\bR^d)$. The Fourier transform can
be extended to the space $\cS'(\bR^d)$ of temperate distributions on
$\bR^d$ by a duality principle \cite{GW}.

The Hilbert transform has an equivalent definition via the Fourier
multiplier $-i\sgn$, where $\sgn(\xi)$ takes value $-1,0,1$ for
$\xi<0$, $\xi=0$ and $\xi>0$, respectively. Specifically, we have
for all $f\in L^2(\bR)$
\begin{equation}\label{hilbertsgn}
(H f)\hat{\,}(\xi)=-i\sgn(\xi)f(\xi),\ \  \xi\in\bR.
\end{equation}
There are two important consequences of this. First of all, we have
for all $f\in L^2(\bR)$ that
\begin{equation}\label{positive}
\supp(f+iH f)\subseteq\bR_+
\end{equation}
where $\bR_+:=[0,+\infty)$. In other words, the analytic signal
suppresses all the negative frequency components of the original
signal. This justifies the physical soundness of the analytic signal
method. Secondly, the multiplier definition (\ref{hilbertsgn}) makes
it possible to develop fast algorithms making use of the lighting
fast FFT to compute the analytic signal \cite{Ma}. In addition to
the above two properties, it was shown in \cite{Vakman1,Vakman2}
that the Hilbert transform is the only continuous and homogeneous
operator $L$ such that
$$
L(\cos(\omega_0t+\phi_0))=\sin(\omega_0t+\phi_0),\ \ \mbox{ for all
}\omega_0>0,\ \phi_0\in\bR.
$$
This implies that the analytic signal method is the only one that
satisfies some reasonable physical requirements
\cite{Vakman1,Vakman2}. Finally, the Hilbert transform satisfies the
Bedrosian theorem \cite{Bed}: If $f,g\in L^2(\bR)$ satisfy either
$\supp\hat{f}\subseteq\bR_+$, $\supp\hat{g}\subseteq\bR_+$ or
$\supp\hat{f}\subseteq[-a,a]$,
$\supp\hat{g}\subseteq(-\infty,-a]\cup[a,\infty)$ for some positive
number $a$, then there holds the {\it Bedrosian identity}
\begin{equation}\label{bedrosian1}
[H(fg)](x)=f(x)(H g)(x),\ \   x\in\bR.
\end{equation}
The Bedrosian identity simplifies the calculation of the Hilbert
transform of a product of functions, helps understand the
instantaneous amplitude and frequency of signals, and provides a
method of constructing basic signals in the time-frequency analysis
\cite{Bed,Br2,Br,C,NB,P,QWXZ,Stark,YZ2}. It has attracted much
interest from the mathematical community
\cite{Cerejeiras,ChenMicchelli,QCL,QXYYY,VZ,WangSilei,XY,XZ,YZ1,YZ2}. Here, we mention an observation
in \cite{VZ}. It states that the Hilbert transform is essentially
the only operator that satisfies the Bedrosian identity.

We conclude that the analytic signal (\ref{analytic}) is justified by the
above mathematical properties. This paper is motivated by the need
\cite{BS,Ha,XLLR} of defining multidimensional analytic
signals for the time-frequency analysis of multidimensional signals.
Naturally, we are inclined to define the analytic signal of $f\in
L^2(\bR^d)$ through a fixed operator $T:L^2(\bR^d)\to L^2(\bR^d)$ as
$f+Tf$. Thus the definition reduces to the choice to the operator
$T$. In references \cite{BS,Ha}, compositions of the partial
Hilbert transforms are used. One of our purposes is to show that linear
combinations of compositions of the partial Hilbert transforms are
indeed the only choice from the viewpoint of similar mathematical
properties introduced in the last paragraph. The justifications will
be carried out in the next section. In Section 3, we investigate the
multidimensional Bedrosian identity $T(fg)=fTg$ for $f,g\in
L^2(\bR^d)$, where $T$ is a linear combination of compositions of
the partial Hilbert transforms. In particular, a necessary and
sufficient condition and the Bedrosian theorem will be established.
The necessity of the Bedrosian theorem will be discussed as well. In
Section 4, we construct basis multidimensional analytic signals by
the results on the Bedrosian identity.

\section{Multidimensional Analytic Signals}
\setcounter{equation}{0}

Set $\bZ_n:=\{0,1,\ldots,n-1\}$ and $\bN_n:=\{1,2,\ldots,n\}$,
$n\in\bN$. The partial Hilbert transform $H_j$, $j\in\bN_d$, is
defined for $f\in L^2(\bR^d)$ as
$$
(H_jf)(x):=\mbox{p.v.}\frac{1}{\pi}\int_{\bR^d}\frac{f(y)}{x_j-y_j}dy:=\lim_{{\varepsilon\rightarrow0^+}}\frac{1}{\pi}\int_{
|y_j-x_j|\ge\varepsilon}\frac{f(y)}{x_j-y_j}dy,\ \   x\in\bR^d.
$$
It is known that $H_j$ is a bounded linear operator from
$L^2(\bR^d)$ to $L^2(\bR^d)$, \cite{Stein}. Moreover, there holds
\begin{equation}\label{multiplier2}
(H_jf)\hat{\,}(\xi)=-i\sgn(\xi_j)f(\xi),\ \   \xi\in\bR^d,\ \
j\in\bN_d.
\end{equation}
The purpose of the this section is to justify the using of linear
combinations of compositions of the partial Hilbert transforms to
define the multidimensional analytic signal. For simplicity, we
denote by $H_0$ the identity operator on $L^2(\bR^d)$. A linear
combination $T$ of compositions of the partial Hilbert transforms
has the form
\begin{equation}\label{composition}
T=\sum_{\alpha\in\bZ_{d+1}^d}c_\alpha\cH_\alpha,
\end{equation}
where $\cH_\alpha:=\prod_{j\in\bN_d}H_{\alpha_j}$ and $c_\alpha$ are
complex constants. We shall argue from four points of view that the
multidimensional analytic signal $\cA f$ of $f\in L^2(\bR^d)$ should
be defined as
\begin{equation}\label{analytic2}
\cA f:= f+Tf
\end{equation}
via an operator $T$ of the form (\ref{composition}).

\subsection{Concentration of the Frequency} By (\ref{positive}), the negative Fourier frequency of a one-dimensional analytic signal is
suppressed while the energy of the positive frequency is doubled.
For some applications, it is desirable to restrict the frequency to
a certain $d$-hyperoctant, especial the first one, \cite{Ha}.
Following the notations of \cite{VZ}, we let $\nu^k$,
$k\in\bN_{2^d}$, be the extreme points of the cube $[-1,1]^d$.
Assume that $\nu^1$ and $\nu^2$ are the point each of whose
component is $1$ and $-1$, respectively. The Euclidean space $\bR^d$
is divided into $2^d$ $d$-hyperoctants as
$$
\bR^d=\bigcup_{k\in\bN_{2^d}}Q_k,
$$
where $Q_k:=\{\xi:\xi\in\bR^d,\nu_j^k\xi_j\ge0,j\in\bN_d\}$. Suppose
that we have chosen the $k$-th $d$-hyperoctant and desire that for
each $f\in L^2(\bR^d)$ its multidimensional analytic signal $\cA f$
be such that
\begin{equation}\label{concentratek}
(\cA f)\hat{\,}(\xi)=\left\{\begin{array}{cc}2^d\hat{f}(\xi),&\xi\in Q_k,\\
0,&\mbox{otherwise.}\end{array}\right.
\end{equation}
The above equation has the following equivalent form
$$
(\cA
f)\hat{\,}(\xi)=\hat{f}(\xi)\prod_{j\in\bN_d}(1+\nu^k_j\sgn(\xi_j)),\
\ \xi\in\bR^d.
$$
By (\ref{multiplier2}), the above equations holds true for all $f\in
L^2(\bR^d)$ if and only if $\cA$ is defined by (\ref{analytic2})
with $T$ there being given as
$$
T=\prod_{j\in\bN_d}(H_0+i\nu^k_j H_j)-H_0,
$$
which satisfies (\ref{composition}).

\subsection{Computational Advantages}
As a consequence of (\ref{multiplier2}), it can be seen that a
bounded linear operator $T$ has the form (\ref{composition}) if and
only if there exists some function $m\in L^\infty(\bR^d)$ that is
constant on each $d$-hyperoctants such that
\begin{equation}\label{multiplier3}
(Tf)\hat{\,}=m\hat{f},\ \ f\in L^2(\bR^d).
\end{equation}
In other words, operators (\ref{composition}) are given by a very simple Fourier
multiplier. Therefore, efficient numerical algorithms based on fast
multidimensional discrete Fourier transforms (see \cite{BCK} and the
references cited therein) can be developed for the computation of
the multidimensional analytic signal (\ref{analytic2}).

\subsection{Physical Requirements}

Following \cite{Vakman1,Vakman2}, we ask what continuous linear
operators $T$ preserve the class of complex sinusoids. To make this
precise, we need to extend the definition of partial Hilbert
transforms. For each $\alpha\in \bZ_{d+1}^d$, we define $\cH_\alpha:
L^1(\bR^d)\to \cS'(\bR^d)$ by setting for each $f\in L^1(\bR^d)$ as
\begin{equation}\label{defl1}
\langle \cH_\alpha
f,\varphi\rangle:=\prod_{j\in\bN_d}(-1)^{\min(1,\alpha_j)}\
\int_{\bR^d}f\cH_\alpha\varphi dx,\ \ \varphi\in \cS(\bR^d),
\end{equation}
where for $\phi$ in a locally convex space $V$ and $u\in V^*$
$\langle u,\phi\rangle:=u(\phi)$. Since $\cH_\alpha$ is bounded from
$\cS(\bR^d)$ to $L^\infty(\bR^d)$, (\ref{defl1}) is well-defined.
Denote by $\bH^1(\bR^d)$ the Banach space of functions $f\in
L^1(\bR^d)$ such that $H_jf\in L^1(\bR^d)$ endowed with the norm
$$
\|f\|_{\bH^1(\bR^d)}:=\sum_{j\in\bZ_{d+1}}\|H_jf\|_{L^1(\bR^d)}.
$$
The dual of $\bH^1(\bR^d)$ is the space $\mbox{BMO}(\bR^d)$ of
locally integrable functions on $\bR^d$ with bounded mean
oscillation \cite{Stein}, which embeds continuously into
$\cS'(\bR^d)$. Note that
$L^\infty(\bR^d)\subseteq\mbox{BMO}(\bR^d)$. It is clear that
$\cH_\alpha$ is bounded from $\bH^1(\bR^d)$ to $\bH^1(\bR^d)$ for
each $\alpha\in\bZ_{d+1}^d$. Thus, we define for
$f\in\mbox{BMO}(\bR^d)$ $\cH_\alpha f$ again as a BMO function by
\begin{equation}\label{defbmo}
\langle \cH_\alpha
f,g\rangle:=\prod_{j\in\bN_d}(-1)^{\min(1,\alpha_j)}\langle
f,\cH_\alpha g\rangle,\ \ g\in \bH^1(\bR^d).
\end{equation}

For $\xi\in\bR^d$, we set $\sgn(\xi):=(\sgn(\xi_j):j\in\bR^d)$ and
shall characterize operators that preserve the class of complex
sinusoids.

\begin{prop} Suppose that $T$ is a linear operator from
$L^\infty(\bR^d)\cup L^2(\bR^d)$ to $\cS'(\bR^d)$ that is continuous
in the topology of $\cS'(\bR^d)$ and satisfies for some function
$\lambda:\bC^d\to\bC$ that
\begin{equation}\label{sgnlambda}
Te^{i(\omega,\cdot)}=\lambda(\sgn(\omega))e^{i(\omega,\cdot)},\ \
\mbox{for all }\xi\in\bR^d.
\end{equation}
Then $T$ has the form (\ref{composition}).
\end{prop}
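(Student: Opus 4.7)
The plan is to construct a candidate operator $T'$ of the form (\ref{composition}) that mimics the action of $T$ on complex sinusoids, and then to identify $T$ with $T'$ on $\cS$, $L^2$, and $L^\infty$ by combining Fourier inversion with the $\cS'$-continuity hypothesis.

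First, because $\sgn$ takes values only in $\{-1,0,1\}$, only the finitely many values of $\lambda$ on $\{-1,0,1\}^d$ matter, and the symbol $m(\xi):=\lambda(\sgn(\xi))$ is bounded and constant on each of the $2^d$ open hyperoctants. By the characterization recorded at the start of Subsection 2.2, I would pick coefficients $c_\alpha\in\bC$ so that $T':=\sum_{\alpha\in\bZ_{d+1}^d}c_\alpha\cH_\alpha$ has $L^2$ Fourier multiplier equal to $m$ almost everywhere; the same linear combination of the BMO extensions in (\ref{defbmo}) defines $T'$ on $L^\infty$. A direct application of (\ref{defbmo}) and Plancherel then gives $\cH_\alpha e^{i(\omega,\cdot)}=m_\alpha(\omega)\,e^{i(\omega,\cdot)}$, where $m_\alpha$ is the pointwise symbol of $\cH_\alpha$, so summing yields $T'e^{i(\omega,\cdot)}=m(\omega)\,e^{i(\omega,\cdot)}$, which matches $Te^{i(\omega,\cdot)}$ at every $\omega$ whose coordinates are all nonzero.

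The main step upgrades this sinusoid-by-sinusoid agreement to $T=T'$ on all of $\cS(\bR^d)$. For $\varphi\in\cS$, Fourier inversion reads $\varphi(x)=(2\pi)^{-d}\int_{\bR^d}\hat\varphi(\omega)e^{i(\omega,x)}\,d\omega$. I would approximate the integral by Riemann sums $\varphi_N$ whose partition nodes all lie off the coordinate hyperplanes (a Lebesgue null set). Testing against any $\phi\in\cS$ via Parseval reduces $\varphi_N\to\varphi$ in $\cS'$ to classical Riemann-sum convergence for the bounded, a.e.-continuous, rapidly-decaying integrand $\hat\varphi(\omega)\hat\phi(-\omega)$. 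The hypothesis and linearity identify $T\varphi_N$ with the analogous Riemann sum for $\int\hat\varphi(\omega)m(\omega)e^{i(\omega,\cdot)}d\omega$, which likewise converges in $\cS'$ to $T'\varphi$. The $\cS'$-continuity of $T$ therefore forces $T\varphi=T'\varphi$ on $\cS$.

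To finish, for $f\in L^2(\bR^d)$ I would pick $\varphi_n\in\cS$ with $\varphi_n\to f$ in $L^2$; $\cS'$-continuity of $T$ yields $T\varphi_n\to Tf$ in $\cS'$, while $L^2$-boundedness of $T'$ yields $T'\varphi_n\to T'f$ in $L^2$ and hence in $\cS'$, so $Tf=T'f$. The $L^\infty$ case is handled analogously by mollification and truncation combined with the BMO extension (\ref{defbmo}). The principal obstacle I expect is the Riemann-sum step: I must select partitions that avoid the coordinate hyperplanes so that $\lambda(\sgn\omega_k)$ is unambiguous and agrees with $m$ at the nodes, and I must confirm that Parseval-coupled Riemann sums of a bounded, a.e.-continuous, Schwartz-decaying integrand converge weakly to the integral. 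The remaining steps are routine invocations of the multiplier characterization of Subsection 2.2 and standard density arguments.
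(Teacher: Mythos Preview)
Your proposal is correct and rests on the same core idea as the paper's proof: approximate by finite linear combinations of complex sinusoids and invoke the assumed $\cS'$-continuity of $T$ to pass to the limit. The paper packages this slightly differently. Rather than building a comparison operator $T'$ and using explicit Riemann sums with nodes chosen off the coordinate hyperplanes, it works directly with an arbitrary $f\in L^2(\bR^d)$, appeals abstractly to the density of $\span\{e^{i(\omega,\cdot)}:\omega\in\bR^d\}$ in $\cS'(\bR^d)$, and then handles the discontinuity of $\lambda(\sgn(\cdot))$ on the Fourier side by pairing $(Tf)\hat{\,}$ against test functions $\varphi$ supported in a single hyperoctant $Q_k$, where $\lambda(\sgn(\cdot))\varphi\in\cS(\bR^d)$ automatically. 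Your node-avoidance trick and the paper's test-function trick are two ways of sidestepping the same measure-zero obstruction; the paper's version is a bit shorter since it never needs to verify Riemann-sum convergence or introduce $T'$, while yours makes the approximation completely explicit. Either way, the paper (like you) only carries the argument through for $f\in L^2(\bR^d)$ before invoking the multiplier characterization of operators of the form (\ref{composition}); your additional $L^\infty$ step via mollification goes slightly beyond what the paper records.
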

\begin{proof}
Let $f\in L^2(\bR^d)$. Since
$\bS:=\span\{e^{i(\omega,\cdot)}:\omega\in\bR^d\}$ is dense in
$\cS'(\bR^d)$, there exists a sequence $f_n\in\bS$ that converges to
$f$ in $\cS'(\bR^d)$. By the continuity of $T$, $Tf_n$ converges to
$Tf$. Thus, $(Tf_n)\hat{\,}\to (Tf)\hat{\,}$. Denote for
$\omega\in\bR^d$ by $\delta_\omega$ the delta distribution at
$\omega$. We observe that
$$
(e^{i(\omega,\cdot)})\hat{\,}=(2\pi)^d\delta_\omega.
$$
Consequently,
\begin{equation}\label{deltafourier}
(Te^{i(\omega,\cdot)})\hat{\,}=(2\pi)^d\lambda(\sgn(\omega))\delta_\omega.
\end{equation}
Let $\varphi$ be an arbitrary function in
$\cS_k(\bR^d):=\{\varphi\in\cS(\bR^d):\supp\varphi\subseteq Q_k\}$
for some $k\in\bN_{2^d}$. Then we see that
$\lambda(\sgn(\cdot))\varphi\in\cS(\bR^d)$.  Assume that
$f_n=\sum_{j\in\bN_{k_n}}c_{nj}e^{i(\omega_{nj},\cdot)}$. We get by
(\ref{deltafourier}) that
$$
\langle
(Tf_n)\hat{\,},\varphi\rangle=\sum_{j\in\bN_{k_n}}c_{nj}(2\pi)^d\lambda(\sgn(\omega_{nj})\varphi(\omega_{nj})=\langle
\hat{f_n},\lambda(\sgn(\cdot))\varphi\rangle.
$$
Since
$$
\lim_{n\to\infty}\langle (Tf_n)\hat{\,},\varphi\rangle=\langle
(Tf)\hat{\,},\varphi\rangle\mbox{ and
}\lim_{n\to\infty}\langle\hat{f_n},\lambda(\sgn(\cdot))\varphi\rangle=\langle
\hat{f},\lambda(\sgn(\cdot))\varphi\rangle,
$$
we have for all $\varphi\in\cS_k(\bR^d)$, $k\in\bN_{2^d}$ that
$$
\langle (Tf)\hat{\,},\varphi\rangle=\langle
\lambda(\sgn(\cdot))\hat{f},\varphi\rangle.
$$
Since the linear span of $\cup_{k\in\bN_{2^d}}S_k(\bR^d)$ is dense in
$\cS(\bR^d)$, the above equation remains true for all $\varphi\in
\cS(\bR^d)$. Therefore, $(Tf)\hat{\,}=\lambda(\sgn(\cdot))\hat{f}$.
We complete the proof by pointing out that a linear operator
$T:L^2(\bR^d)\to L^2(\bR^d)$ has the form (\ref{composition}) if and
only if it is defined by a Fourier multiplier that is constant on
each $Q_k$, $k\in\bN_{2^d}$.
\end{proof}

\subsection{The Bedrosian Identity}

Considering the importance of the Bedrosian identity in the
time-frequency of one-dimensional signals, it is natural for us to
expect a linear operator $T$ in the definition of the
multidimensional analytic signal (\ref{analytic2}) to satisfy a
multidimensional Bedrosian identity. Assume that a bounded linear
operator $T: L^2(\bR^d)\to L^2(\bR^d)$ is engaged to define
multidimensional analytic signal by (\ref{analytic2}). First of all,
we shall require that $T$ is translation invariant, namely, for all
$x\in\bR^d$ and $f\in L^2(\bR^d)$
$$
T(f(\cdot-x))=(Tf)(\cdot-x).
$$
The reason is that there is usually a delay in time of an input in
signal processing. Thus we would like the processor $T$ to have the
property that a delay in the input will simply cause the same delay
in the output.

Now we are seeking extensions of the Bedrosian theorem that a
bounded linear translation invariant operator could possibly
satisfy. We say that a bounded linear operator $T:L^2(\bR^d)\to
L^2(\bR^d)$ satisfies the {\it type one Bedrosian theorem} if the
Bedrosian identity
\begin{equation}\label{bedrosian2}
T(fg)=fTg
\end{equation}
holds true for all $f,g\in\cS(\bR^d)$ that satisfy
\begin{equation}\label{cube}
\supp\hat{f}\subseteq\prod_{j\in\bN_d}[-a_j,a_j],\
\supp\hat{g}\subseteq\prod_{j\in\bN_d}\bR\setminus(-a_j,a_j),\
\mbox{for some }a=(a_j\ge0:j\in\bN_d).
\end{equation}

The following results was proved in \cite{VZ}.

\begin{lemma}\label{lemmavz}
A bounded linear translation invariant operator $T:L^2(\bR^d)\to
L^2(\bR^d)$ satisfies the type one Bedrosian theorem if and only if
it is of the form (\ref{composition}).
\end{lemma}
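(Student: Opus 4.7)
My plan is to exploit translation invariance to convert $T$ into a Fourier multiplier, and then use the Bedrosian identity to constrain the multiplier's symbol to be constant on each hyperoctant.

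The standard characterization of bounded translation invariant operators on $L^2(\bR^d)$ gives a symbol $m\in L^\infty(\bR^d)$ with $(Tf)\hat{\,}=m\hat{f}$. In Fourier variables the Bedrosian identity $T(fg)=fTg$ becomes the convolution equation $m(\hat{f}*\hat{g})=\hat{f}*(m\hat{g})$, or equivalently
$$
\int_{\bR^d}\hat{f}(\xi-\eta)\bigl[m(\xi)-m(\eta)\bigr]\hat{g}(\eta)\,d\eta=0\quad\text{a.e. }\xi,
$$
valid for every pair of Schwartz test functions with $\supp\hat{f}\subseteq\prod_j[-a_j,a_j]$ and $\supp\hat{g}\subseteq\prod_j\bR\setminus(-a_j,a_j)$, for every $a\in\bR_+^d$.

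For the sufficiency direction I would argue as follows. By (\ref{multiplier2}), if $T$ has the form (\ref{composition}) then its symbol $m$ is constant on each $Q_k$. For any $\xi$ in the support of $\hat{f}*\hat{g}$ and any $\eta$ contributing to the convolution, the spectral conditions force $|\xi_j-\eta_j|\le a_j<|\eta_j|$ for each $j$, so $\sgn(\xi_j)=\sgn(\eta_j)$ and $\xi,\eta$ lie in a common hyperoctant; hence $m(\xi)=m(\eta)$ and the convolution identity holds term by term.

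For the necessity direction I would pick any Lebesgue points $\xi_0,\eta_0\in Q_k$ of $m$ satisfying $|\xi_{0,j}-\eta_{0,j}|<|\eta_{0,j}|$ for every $j$, and choose $a_j$ between these two quantities. Plugging in mollifier-type $\hat{g}$ concentrated near $\eta_0$ and $\hat{f}$ concentrated near $\xi_0-\eta_0$ into the displayed convolution identity, the Lebesgue differentiation theorem yields $m(\xi_0)=m(\eta_0)$. The condition $|\xi_{0,j}-\eta_{0,j}|<|\eta_{0,j}|$ reduces (inside the common hyperoctant) to $\nu^k_j\xi_{0,j}<2\nu^k_j\eta_{0,j}$, so for any two Lebesgue points $\xi,\xi'$ in the interior of $Q_k$ one can insert an auxiliary Lebesgue point $\eta$ with $\nu^k_j\eta_j$ sufficiently large to chain $m(\xi)=m(\eta)=m(\xi')$. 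Thus $m$ is essentially constant on each $Q_k$, and the characterization of operators of the form (\ref{composition}) via piecewise-constant symbols (noted in \S2.2) completes the argument.

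The main obstacle is the clean passage from the averaged identity to pointwise equality at Lebesgue points: one must choose the approximating bumps so that their supports sit strictly inside the prescribed spectral windows $\prod_j[-a_j,a_j]$ and $\prod_j\bR\setminus(-a_j,a_j)$, and the chaining argument must be handled carefully on the measure-zero hyperoctant boundaries, where the symbol's value is irrelevant for (\ref{composition}).
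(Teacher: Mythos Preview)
The paper does not actually prove this lemma; the sentence preceding it reads ``The following results was proved in \cite{VZ},'' and no argument is supplied. So there is no in-paper proof to compare against, and your proposal should be judged on its own.

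Your approach is essentially correct. The sufficiency direction you sketch (that $|\xi_j-\eta_j|\le a_j\le|\eta_j|$ forces $\sgn\xi_j=\sgn\eta_j$, so $m(\xi)=m(\eta)$) is in fact carried out later in the paper as Proposition~\ref{bedrosiantheorem}. For necessity, your mollifier-plus-Lebesgue-point strategy works: since both $(\hat f*\hat g)(\xi)$ and $(\hat f*(m\hat g))(\xi)$ are continuous in $\xi$, the a.e.\ identity $m(\xi)(\hat f*\hat g)(\xi)=(\hat f*(m\hat g))(\xi)$ shows that $m$ agrees a.e.\ with a continuous function on any open set where $\hat f*\hat g\ne0$, and shrinking the bump $\hat g$ toward a Lebesgue point $\eta_0$ of $m$ recovers $m(\eta_0)$ as the value of that continuous representative at $\xi_0$. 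The chaining is also fine: inside $Q_k^o$ the condition $|\xi_{0,j}-\eta_{0,j}|<|\eta_{0,j}|$ is equivalent to $\nu^k_j\eta_{0,j}>\frac12\nu^k_j\xi_{0,j}$, so any two interior points can be linked through a common $\eta_0$ with sufficiently large coordinates, and almost every such $\eta_0$ is a Lebesgue point. The boundaries $\partial Q_k$ form a null set and are irrelevant to the Fourier multiplier, as you note. The only suggestion is to make explicit the step that $m$ has a continuous representative near $\xi_0$ before evaluating the limit there; once that is said, the passage from the averaged identity to the pointwise one is clean.
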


Thus, if we desire that the operator $T$ satisfy the type one
Bedrosian theorem then it must be a linear combination of the
compositions of the partial Hilbert transforms. A more natural extension
of the Bedrosian theorem is the {\it type two Bedrosian theorem}, that is,
(\ref{bedrosian2}) holds true for all $f,g\in\cS(\bR^d)$ satisfying
\begin{equation}\label{ball}
\supp\hat{f}\subseteq B(0,R),\ \supp\hat{g}\subseteq \bR^d\setminus
B^o(0,R),\ \  \mbox{for some }R\ge0,
\end{equation}
where $B(x_0,r):=\{x\in\bR^d:\|x-x_0\|\le r\}$ for $x_0\in\bR^d$ and
$r>0$, and $B^o(x_0,r)$ denotes its interior. When $d=1$, the type one and type two Bedrosian theorems are
identically the same. However, we shall prove that for $d\ge2$ there
exist no nontrivial operators that satisfy the type two Bedrosian
theorem.

\begin{theorem} Let $d\ge 2$. Then there does not exist a nontrivial bounded linear translation invariant operator $T:L^2(\bR^d)\to
L^2(\bR^d)$ that satisfies the type two Bedrosian theorem.
\end{theorem}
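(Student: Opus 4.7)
The plan is to reduce $T$ to its Fourier multiplier and show that the symbol must be essentially constant. Since $T$ is bounded linear and translation-invariant on $L^2(\bR^d)$, the standard theory of such operators provides $m\in L^\infty(\bR^d)$ with $(Tf)\hat{\,}=m\hat{f}$ for every $f\in L^2(\bR^d)$. Taking Fourier transforms in the Bedrosian identity $T(fg)=fTg$ (and using that $\widehat{fg}$ is a constant multiple of $\hat f*\hat g$) yields the equivalent requirement
\begin{equation*}
\int_{\bR^d}\hat f(\xi-\eta)\bigl[m(\xi)-m(\eta)\bigr]\hat g(\eta)\,d\eta=0\qquad\text{for a.e. }\xi\in\bR^d,
\end{equation*}
to hold for all $f,g\in\cS(\bR^d)$ satisfying the support condition (\ref{ball}).

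Fix $R>0$. Since $\hat f$ may be chosen as an arbitrary smooth bump supported in $B(0,R)$ and $\hat g$ as an arbitrary smooth bump supported in $\bR^d\setminus B^o(0,R)$, a standard density/localization argument (the products $\hat f(\xi-\eta)\hat g(\eta)$ are dense in the test functions on the set $\{\|\xi-\eta\|<R,\ \|\eta\|>R\}$) converts the integral identity into the pointwise assertion
\begin{equation*}
m(\xi)=m(\eta)\ \text{for a.e.\ }(\xi,\eta)\ \text{with}\ \|\xi-\eta\|<R\ \text{and}\ \|\eta\|>R.
\end{equation*}
By Fubini, for a.e.\ $\eta$ with $\|\eta\|>R$ the function $m$ is essentially constant on the open ball $B(\eta,R)$, with value equal to its essential value at $\eta$.

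The decisive step uses the hypothesis $d\ge 2$: the open set $\{\eta\in\bR^d:\|\eta\|>R\}$ is \emph{path-connected} (this is exactly what fails when $d=1$, where it splits into two rays and leaves room for the two values that produce the $\sgn$ multiplier of the Hilbert transform). Given any two ``good'' centers $\eta_1,\eta_2\in\{\|\eta\|>R\}$, one may join them by a path in this set and discretize the path along a finite chain $\eta_1=\eta^{(0)},\eta^{(1)},\dots,\eta^{(n)}=\eta_2$ of good centers with consecutive distance less than $R$; the balls $B(\eta^{(i)},R)$ then pairwise overlap on positive-measure sets on which $m$ is forced to take two putative constant values, so these agree along the chain. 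Hence $m$ is essentially equal to a single constant $c_R$ on $\{\|\eta\|>R\}$, and since every $\xi\ne 0$ lies in some $B(\eta,R)$ with $\|\eta\|>R$ (take, e.g., $R=\|\xi\|/2$ and $\eta=\xi$), one actually has $m\equiv c_R$ a.e.\ on $\bR^d$. Comparing two different values of $R$ shows $c_R$ does not depend on $R$, so $m\equiv c$ a.e.\ and $T=cI$; this is precisely the trivial element of the family (\ref{composition}) in which only the coefficient of $\cH_{(0,\ldots,0)}$ is nonzero, proving the claim.

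The main technical hurdle is to carry the chaining step out rigorously within the a.e.\ framework of $L^\infty$. The cleanest route is to fix a Lebesgue representative of $m$ and to note that the set of good centers has full measure in $\{\|\eta\|>R\}$, so that every path in this set can be discretized through good points and the null sets absorbed at the end. The step from the distributional identity on products $\hat f(\xi-\eta)\hat g(\eta)$ to the pointwise a.e.\ equality $m(\xi)=m(\eta)$ must likewise be justified by density of such product bumps in the relevant test function space.
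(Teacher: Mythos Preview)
Your argument is correct in outline and reaches the right conclusion, but it follows a genuinely different route from the paper's proof. The paper first observes that any pair $f,g$ satisfying the cube condition (\ref{cube}) automatically satisfies the ball condition (\ref{ball}) with $R=\|a\|$, so a type-two operator is in particular a type-one operator; Lemma~\ref{lemmavz} then forces the multiplier $m$ to be \emph{piecewise constant on the $2^d$ hyperoctants}. With this discrete structure in hand, the paper finishes by an explicit construction: it locates two hyperoctants sharing a coordinate sign but carrying different values of $m$, builds concrete points $\xi^1,\xi^2$ and nonnegative bumps $\hat f,\hat g$ with carefully placed supports, and reads off a contradiction from a single evaluation of the convolution identity. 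Your approach bypasses Lemma~\ref{lemmavz} entirely and instead extracts from the convolution identity the pointwise relation $m(\xi)=m(\eta)$ on $\{\|\xi-\eta\|<R,\ \|\eta\|>R\}$, then exploits the path-connectedness of $\{\|\eta\|>R\}$ for $d\ge2$ via a chaining argument to force $m$ to be constant. Your method is more self-contained and makes the role of the dimension hypothesis geometrically transparent (connectedness of the sphere complement is exactly what breaks at $d=1$); the paper's method, by reducing first to a finitely-valued multiplier, sidesteps all the a.e.\ bookkeeping that you correctly flag as the main technical burden in your last paragraph. Your density step is legitimate: after the change of variables $(u,v)=(\xi-\eta,\eta)$ the test functions become tensor products $F(u)G(v)$ on the product set $B(0,R)\times(\bR^d\setminus \overline{B(0,R)})$, whose finite linear combinations are dense, and a countable-family argument handles the $\xi$-dependent null sets.
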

\begin{proof} Let $T:L^2(\bR^d)\to
L^2(\bR^d)$ be bounded linear and translation invariant. Assume that
it satisfies the type two Bedrosian identity. Let $f,g$ be arbitrary
functions in $\cS(\bR^d)$ that satisfies (\ref{cube}). Let
$R:=\|a\|$. Then we observe that (\ref{ball}) holds true. By the
assumption, the Bedrosian identity (\ref{bedrosian2}) holds. We
conclude that (\ref{bedrosian2}) holds for all $f,g\in\cS(\bR^d)$
satisfying (\ref{cube}) for some $a:=(a_j\ge0:j\in\bN_d)$. By Lemma
\ref{lemmavz}, $T$ must be of the form (\ref{composition}). Thus,
$T$ is defined by (\ref{multiplier3}) via a Fourier multiplier $m\in
L^\infty(\bR^d)$ that is constant in each $d$-hyperoctant. Let $m_k$
be the constant value that $m$ takes in $Q_k$, $k\in\bN_{2^d}$.

Assume that $T$ is neither the zero operator nor a multiple of the
identity operator. In other words, $m_{k_1}\ne m_{k_2}$ for some
$k_1,k_2\in\bN_{2^d}$. Since $d\ge 2$, there must also exist some
$k_3\in\bN_{2^d}$ such that $m_{k_1}\ne m_{k_3}$ or $m_{k_2}\ne
m_{k_3}$. There hence exists a pair $l_1,l_2\in\bN_{2^d}$ such that
$m_{l_1}\ne m_{l_2}$ and $\nu^{l_1}_{j_0}\nu^{l_2}_{j_0}=1$ for some
$j_0\in\bN_d$. For notational simplicity, assume that $l_1=1$ and
$j_0=1$. Since $l_2\ne 1$, the set $\{j\in\bN_d:\nu^{l_2}_j=-1\}$ is
nonempty. As a consequence, there exist positive constants
$\epsilon,r_1,r_2$ so that $\xi^1,\xi^2\in\bR^d$ defined by
$\xi^1:=(1-\frac\epsilon2,r_1,r_1,\ldots,r_1)$ and
$$
\xi^2_k:=\left\{\begin{array}{cc} 1-\epsilon,&k=1,\\
\frac{r_1}2,& \nu^{l_2}_k=1,\\
-r_2,&\nu^{l_2}_k=-1.
\end{array}\right.
$$
satisfy $\|\xi^1\|<1$, $\|\xi^2\|>1$ and $\|\xi^1-\xi^2\|<1$. Choose
$r>0$ so small that $B(\xi^1-\xi^2,r)\subseteq Q_1\cap B(0,1)$ and
$B(\xi^2,r)\subseteq Q_{l_2}\cap(\bR^d\setminus B(0,1))$. One can
construct nonnegative $f,g\in\cS(\bR^d)$ such that
$\supp\hat{f}\subseteq B(\xi^1-\xi^2,r)$, $\supp\hat{g}\subseteq
B(\xi^2,r)$, $\hat{f}(\xi^1-\xi^2)>0$, and $\hat{g}(\xi^2)>0$. Then
we have that
$$
\supp\hat{f}\subseteq B(0,1),\ \ \supp\hat{g}\subseteq
\bR^d\setminus B^o(0,1).
$$
Since $T$ satisfies the type two Bedrosian theorem,
(\ref{bedrosian2}) holds. Applying the Fourier transform to both
sides of (\ref{bedrosian2}) yields that
$$
\int_{\bR^d}\hat{f}(\xi-\eta)\hat{g}(\eta)(m(\xi)-m(\eta))d\eta=0,\
\ \mbox{for all }\xi\in\bR^d,
$$
which has the following form at $\xi=\xi^1$,
$$
(m_1-m_{l_2})\int_{B(\xi^2,r)}\hat{f}(\xi^1-\eta)\hat{g}(\eta)d\eta=0.
$$
However, since $m_1\ne m_{l_2}$,
$\hat{f}(\xi^1-\xi^2)\hat{g}(\xi^2)>0$, and
$\hat{f}(\xi^1-\cdot)\hat{g}$ is nonnegative,
$$
(m_1-m_{l_2})\int_{B(\xi^2,r)}\hat{f}(\xi^1-\eta)\hat{g}(\eta)d\eta\ne0.
$$
We hence come to a contradiction and complete the proof.
\end{proof}

Based on the above mathematical considerations, we conclude that the
only operators that should be used in (\ref{analytic2}) to define
the multidimensional analytic signal should be linear combinations
of the compositions of the partial Hilbert transforms.

\section{Multidimensional Bedrosian Identities}
\setcounter{equation}{0}

The purpose of the this section is to study the Bedrosian identity
(\ref{bedrosian2}) for the time-frequency analysis of
multidimensional analytic signals. By the discussion in the last
section, we shall assume from now on that $T$ is of the form
(\ref{composition}). In other words, there exist constants
$m_k\in\bC$, $k\in\bN_{2^d}$ such that $T$ is defined by
(\ref{multiplier3}) through the Fourier multiplier $m\in
L^\infty(\bR^d)$ given as
\begin{equation}\label{tmultiplier}
m(\xi):=m_k,\ \ \xi\in Q_k,\ k\in\bN_{2^d}.
\end{equation}
Denote by $T^*$ the adjoint operator of $T$ on $L^2(\bR^d)$, that
is,
$$
(Tf,g)_{L^2(\bR^d)}=(f,T^*g)_{L^2(\bR^d)},\ \ f,g\in L^2(\bR^d),
$$
where $(\cdot,\cdot)_{L^2(\bR^d)}$ is the inner product on
$L^2(\bR^d)$. It is clear that $T^*$ is defined by the Fourier
multiplier $\overline{m}$. By definition (\ref{defl1}), as a linear
combination of compositions of partial Hilbert transform, $T$ is
well-defined on $L^1(\bR^d)$ with range in $\cS'(\bR^d)$ by
\begin{equation}\label{definetl1}
\langle Tf,\varphi\rangle=\langle
f,\overline{T^*\overline{\varphi}}\rangle, \ \ \varphi\in
\cS(\bR^d).
\end{equation}

We shall start the characterization of the Bedrosian identity with a
lemma on a property of the operator $T$. To this end, we recall that
the Fourier transform $\hat{S}$ of a temperate distribution
$S\in\cS'(\bR^d)$ is again a temperate distribution satisfying
\begin{equation}\label{fouriercs}
\langle \hat{S},\varphi\rangle=\langle S, \hat{\varphi}\rangle,\ \
\varphi\in\cS(\bR^d).
\end{equation}

\begin{lemma}\label{fouriert}
For all $f\in L^1(\bR^d)$, $(Tf)\hat{\,}=m\hat{f}$.
\end{lemma}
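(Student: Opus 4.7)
The approach is a density argument: approximate $f\in L^1(\bR^d)$ by Schwartz functions $f_n\to f$ in $L^1(\bR^d)$, apply (\ref{multiplier3}) to get $(Tf_n)\hat{\,}=m\hat{f_n}$ as an identity in $L^2(\bR^d)\subset\cS'(\bR^d)$, and pass to the limit in $\cS'(\bR^d)$. A preliminary check is that the two a priori different meanings of $Tf_n$---as an $L^2$-function and as the tempered distribution defined by (\ref{definetl1})---coincide on $L^1\cap L^2$; this follows by rewriting the non-conjugated pairing as
\[
\int (Tf_n)\varphi\,dx=(Tf_n,\overline{\varphi})_{L^2(\bR^d)}=(f_n,T^*\overline{\varphi})_{L^2(\bR^d)}=\int f_n\,\overline{T^*\overline{\varphi}}\,dx,
\]
which is exactly the right-hand side of (\ref{definetl1}).

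To pass to the limit, for any $\varphi\in\cS(\bR^d)$ the function $T^*\overline{\varphi}$ lies in $L^\infty(\bR^d)$ because each $\cH_\alpha$ sends $\cS(\bR^d)$ into $L^\infty(\bR^d)$ (as noted right after (\ref{defl1})). Hence
\[
\langle Tf_n,\varphi\rangle=\int f_n\,\overline{T^*\overline{\varphi}}\,dx\longrightarrow \int f\,\overline{T^*\overline{\varphi}}\,dx=\langle Tf,\varphi\rangle,
\]
so $Tf_n\to Tf$ in $\cS'(\bR^d)$, and continuity of the Fourier transform on $\cS'(\bR^d)$ yields $(Tf_n)\hat{\,}\to(Tf)\hat{\,}$ in $\cS'(\bR^d)$. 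On the other hand, $\hat{f_n}\to\hat f$ uniformly on $\bR^d$ with $\|\hat{f_n}\|_\infty\le\|f_n\|_{L^1(\bR^d)}$ uniformly bounded; dominated convergence applied to $\int m\hat{f_n}\varphi\,dx$ therefore gives $m\hat{f_n}\to m\hat f$ in $\cS'(\bR^d)$. Equating the two limits produces the desired equality $(Tf)\hat{\,}=m\hat f$.

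The only delicate points are the consistency check above and the mapping property $T^*:\cS(\bR^d)\to L^\infty(\bR^d)$; both reduce to the structural description of $T$ (and hence $T^*$) as a finite linear combination of compositions of partial Hilbert transforms, which is already set up in the paper. I anticipate no serious obstacle---the argument is essentially bookkeeping with distributions once the $L^2$ identity and the $L^1$-definition are aligned.
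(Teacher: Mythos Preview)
Your proof is correct but takes a different route from the paper's. The paper proceeds by direct computation rather than approximation: starting from the duality definitions, it writes
\[
\langle (Tf)\hat{\,},\varphi\rangle=\langle Tf,\hat\varphi\rangle=\int_{\bR^d} f\,\overline{T^*\overline{\hat\varphi}}\,dx,
\]
sets $h:=(\overline{T^*\overline{\hat\varphi}})\check{\,}\in L^1(\bR^d)$, invokes the elementary $L^1$--$L^1$ identity $\int f\hat h=\int\hat f\,h$, and then computes explicitly that $h=m\varphi$. Your density argument works, but at the cost of the consistency check and two separate limit passages; the paper's approach is shorter and exploits directly the multiplier form of $T^*$ on the Fourier side without ever leaving the fixed test function $\varphi$.
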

\begin{proof}
Let $f\in L^1(\bR^d)$. We proceed by equation (\ref{fouriercs}) and
(\ref{definetl1}) that for all $\varphi\in\cS(\bR^d)$
\begin{equation}\label{fourierteq1}
\langle \hat{Tf},\varphi\rangle=\langle Tf,
\hat{\varphi}\rangle=\langle
f,\overline{T^*\overline{\hat{\varphi}}}\rangle.
\end{equation}
Set $h:=(\overline{T^*\overline{\hat{\varphi}}})\check{\,}$, the
inverse Fourier transform of
$\overline{T^*\overline{\hat{\varphi}}}$. Note that $h\in
L^1(\bR^d)$. Thus, by (\ref{fourierteq1})
$$
\langle
(Tf)\hat{\,},\varphi\rangle=\int_{\bR^d}f\hat{h}dx=\int_{\bR^d}\hat{f}hdx..
$$
We compute that
$$
h=(\overline{T^*\overline{\hat{\varphi}}})\check{\,}=\frac1{(2\pi)^d}\overline{(T^*\overline{\hat{\varphi}})\hat{\,}}=
\overline{(T^*(\overline{\varphi})\check{\,})\hat{\,}}=\overline{\overline{m}\,\overline{\varphi}}=m\varphi.
$$
Combining the above two equations proves the lemma.
\end{proof}

\begin{theorem}\label{characterization}
Let $T$ be defined by the Fourier multiplier (\ref{tmultiplier}).
Then $f,g\in L^2(\bR^d)$ satisfies the Bedrosian identity
(\ref{bedrosian2}) if and only if
\begin{equation}\label{characeq}
\sum_{j\in\bN_{2^d},\, j\ne
k}(m_k-m_j)\int_{Q_j}\hat{f}(\xi-\eta)\hat{g}(\eta)d\eta=0,\ \
\xi\in Q_k,\ k\in\bN_{2^d}.
\end{equation}
\end{theorem}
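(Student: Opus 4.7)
The natural strategy is to apply the Fourier transform to both sides of the identity $T(fg)=fTg$ and exploit the fact that $T$ is a Fourier multiplier operator with symbol $m$ given by (\ref{tmultiplier}). Since $f,g\in L^2(\bR^d)$, the Cauchy--Schwarz inequality ensures that $fg\in L^1(\bR^d)$, so Lemma \ref{fouriert} gives $(T(fg))\hat{\,}=m\,(fg)\hat{\,}$ as a bounded continuous function. Likewise $Tg\in L^2(\bR^d)$, so $fTg\in L^1(\bR^d)$ and its Fourier transform is again a bounded continuous function. Consequently the Bedrosian identity $T(fg)=fTg$ in $L^1(\bR^d)$ is equivalent to the pointwise equality of the two Fourier transforms on all of $\bR^d$.

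Next I would use the convolution theorem to rewrite each side. With the convention in this paper, $(fg)\hat{\,}=(2\pi)^{-d}\hat f*\hat g$ and, since $(Tg)\hat{\,}=m\hat g$ on $L^2(\bR^d)$, one has $(fTg)\hat{\,}=(2\pi)^{-d}\hat f*(m\hat g)$. Both convolutions are well defined as continuous functions because they are convolutions of $L^2$ functions. Subtracting and multiplying through by $(2\pi)^d$, the identity $T(fg)=fTg$ becomes
\begin{equation}\label{planintegral}
\int_{\bR^d}\hat f(\xi-\eta)\hat g(\eta)\bigl(m(\xi)-m(\eta)\bigr)\,d\eta=0,\quad \xi\in\bR^d.
\end{equation}

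Finally I would partition the domain of integration according to the $d$-hyperoctants $Q_j$, $j\in\bN_{2^d}$. For $\xi\in Q_k$ we have $m(\xi)=m_k$, and on each $Q_j$ we have $m(\eta)=m_j$. Splitting (\ref{planintegral}) as a sum over $j\in\bN_{2^d}$ and noting that the $j=k$ term vanishes identically (since $m_k-m_k=0$), one recovers exactly (\ref{characeq}). Conversely, if (\ref{characeq}) holds for every $k$ and every $\xi\in Q_k$, then recombining the sum over $j$ (including the trivial $j=k$ contribution) yields (\ref{planintegral}) on every $Q_k$, hence on $\bR^d$, and Fourier inversion reconstructs the Bedrosian identity.

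There is no real obstacle here; the argument is essentially a change-of-viewpoint calculation. The only thing to be slightly careful about is the integrability that justifies passing the identity through the Fourier transform and back --- specifically, verifying that $fg$ and $fTg$ both lie in $L^1(\bR^d)$ so that Lemma \ref{fouriert} applies, and that the two convolutions are continuous functions so that the equivalence between (\ref{planintegral}) and (\ref{characeq}) holds pointwise rather than only almost everywhere.
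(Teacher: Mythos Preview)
Your proof is correct and follows essentially the same route as the paper: take Fourier transforms, invoke Lemma \ref{fouriert} and the convolution theorem, then split the resulting integral over the hyperoctants $Q_j$. The only small slip is calling $m\,(fg)\hat{\,}$ a continuous function --- it is not, since $m$ jumps across hyperoctant boundaries --- but the argument survives because each $\int_{Q_j}\hat f(\xi-\eta)\hat g(\eta)\,d\eta$ is continuous in $\xi$, so the a.e.\ identity extends to all of $Q_k$ by continuity; the paper's own proof glosses over the same point.
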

\begin{proof}
Since the Fourier transform is injective from $\cS'(\bR^d)$ to
$\cS'(\bR^d)$, (\ref{bedrosian2}) holds if and only if
$$
(T(fg))\hat{\,}=(fTg)\hat{\,}.
$$
By Lemma \ref{fouriert} and writing the Fourier transform of a
product of two functions as their convolution, the above equations
is equivalent to
\begin{equation}\label{characeq1}
m(\xi)\int_{\bR^d}\hat{f}(\xi-\eta)g(\eta)d\eta=\int_{\bR^d}\hat{f}(\xi-\eta)m(\eta)\hat{g}(\eta)d\eta,\
\ \ae \xi\in\bR^d.
\end{equation}
Since the left hand side above is continuous about $\xi\in\bR^d$,
(\ref{characeq1}) holds if and only if
\begin{equation}\label{characeq2}
m(\xi)\int_{\bR^d}\hat{f}(\xi-\eta)g(\eta)d\eta=\int_{\bR^d}\hat{f}(\xi-\eta)m(\eta)\hat{g}(\eta)d\eta,\
\ \mbox{for all } \xi\in\bR^d.
\end{equation}
By (\ref{tmultiplier}), (\ref{characeq2}) has the form
(\ref{characeq}). The proof is complete.
\end{proof}

In the one-dimensional case, various characterizations of the
Bedrosian identity have been proposed in the literature. For
instance, the above theorem for $d=1$ was proved in \cite{YZ2}. A
similar result was presented in \cite{Br2} under the assumptions
that $f,g\in L^2(\bR)\cap L^\infty(\bR)$. Other characterizations
can be found in \cite{XY,YZ1}.

We now turn to sufficient conditions for the Bedrosian identity. We
first prove the following generalization of the Bedrosian theorem,
which was first proved in \cite{Stark} for the case when $T$ is the
{\it total Hilbert transform} $\prod_{j\in\bN_d}H_j$.

\begin{prop}\label{bedrosiantheorem}
If $f,g\in L^2(\bR^d)$ satisfy for some $a:=(a_j\ge0:j\in\bN_d)$ and
$b:=(b_j\ge0:j\in\bN_d)$ that
\begin{equation}\label{bedrosiantheoremwq}
\supp\hat{f}\subseteq\prod_{j\in\bN_d}[-a_j,b_j],\
\supp\hat{g}\subseteq\prod_{j\in\bN_d}\bR\setminus(-b_j,a_j)
\end{equation}
then the Bedrosian identity (\ref{bedrosian2}) holds true.
\end{prop}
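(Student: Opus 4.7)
The plan is to apply Theorem \ref{characterization}, which reduces the claim to verifying
\[
\sum_{j\in\bN_{2^d},\, j\ne k}(m_k-m_j)\int_{Q_j}\hat{f}(\xi-\eta)\hat{g}(\eta)\,d\eta=0,\quad \xi\in Q_k,\ k\in\bN_{2^d}.
\]
I would prove this by showing that, under the support assumptions (\ref{bedrosiantheoremwq}), the integrand $\eta\mapsto\hat{f}(\xi-\eta)\hat{g}(\eta)$ is supported in $Q_k$ whenever $\xi$ lies in the interior of $Q_k$. Once this is done, every integral $\int_{Q_j}\cdot\,d\eta$ with $j\ne k$ vanishes and the sum is zero automatically, which is much stronger than what (\ref{characeq}) actually requires.

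The heart of the argument is a coordinatewise sign analysis. Non-vanishing of $\hat{f}(\xi-\eta)$ forces $\xi_j-b_j\le\eta_j\le\xi_j+a_j$, while non-vanishing of $\hat{g}(\eta)$ forces $\eta_j\le -b_j$ or $\eta_j\ge a_j$, for every $j\in\bN_d$. Fix $\xi$ with $\nu^k_j\xi_j>0$ for each $j$. If $\nu^k_j=1$ then $\xi_j>0$, so the bound $\eta_j\ge\xi_j-b_j>-b_j$ rules out the branch $\eta_j\le-b_j$ and forces $\eta_j\ge a_j\ge 0$. Symmetrically, if $\nu^k_j=-1$ then $\xi_j<0$, whence $\eta_j\le\xi_j+a_j<a_j$ kills $\eta_j\ge a_j$ and leaves $\eta_j\le -b_j\le 0$. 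In either case $\eta_j$ carries the same sign as $\xi_j$; doing this for every $j$ shows $\eta\in Q_k$, as desired.

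This establishes (\ref{characeq}) at every interior point of $Q_k$, which has full measure; the left-hand side of (\ref{characeq}) is a continuous function of $\xi$ (as a finite sum of convolutions of $L^2$-data), so the identity extends to all of $Q_k$, and Theorem \ref{characterization} yields (\ref{bedrosian2}). The one mildly delicate point in the argument is the careful handling of strict versus non-strict inequalities when some $a_j$ or $b_j$ equals zero; but the strict interior hypothesis $\nu^k_j\xi_j>0$ is precisely what is needed to rule out one of the two branches of the $g$-support condition, so the degenerate cases are absorbed without extra work. I do not anticipate any deeper obstacle.
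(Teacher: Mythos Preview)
Your proposal is correct and follows essentially the same route as the paper: both reduce to the characterization Theorem \ref{characterization} and then verify (\ref{characeq}) by a coordinatewise sign analysis showing that $\hat{f}(\xi-\eta)\hat{g}(\eta)$ cannot be nonzero when $\xi\in Q_k$ and $\eta\in Q_j$ with $j\ne k$. The paper phrases it as a contradiction (pick a coordinate $l$ where $\xi_l\eta_l\le 0$ and derive incompatible inequalities), while you phrase it constructively (force $\eta$ into $Q_k$ coordinate by coordinate); your explicit restriction to interior $\xi$ followed by a continuity extension is a slightly cleaner handling of the boundary than the paper's, but the substance is the same.
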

\begin{proof} It suffices to prove that the integrand in each
integral of (\ref{characeq}) vanishes almost everywhere. Let $\xi\in
Q_k$ and $j\ne k\in \bN_{2^d}$. We need to show that
$\hat{f}(\xi-\eta)\hat{g}(\eta)$ is zero for almost every $\eta\in
Q_j$. Assume that there exists $\eta\in Q_j$ that is in the support
of $\hat{f}(\xi-\cdot)\hat{g}$. Then $\eta\in \supp\hat{g}\cap Q_j$
and $\xi-\eta\in \supp\hat{f}$. Since $j\ne k$, there exists
$l\in\bN_d$ such that $\xi_l\eta_l\le 0$. We may assume that
$\xi_l\ge0$ and $\eta_l\le 0$. By (\ref{bedrosiantheoremwq}),
$\eta_l<-b_l$ while $\xi_l-\eta_l\le b_l$, which is impossible. The
contradiction completes the proof.
\end{proof}

To present the next sufficient condition, we set for
$j\in\bN_{2^d}$, $I_j:=\{k\in\bN_{2^d}:m_k=m_j\}$ and
$\cQ_j:=\cup_{k\in I_j}Q_k$. We also call a subset $A\subseteq\bR^d$
closed under addition if for all $x,y\in A$, $x+y\in A$.

\begin{prop}\label{halfsupport}
Let $j\in\bN_{2^d}$. If $\supp\hat{f}\cup\supp\hat{g}\subseteq \cQ_j
$ and $\cQ_j$ is closed under addition then $f,g\in L^2(\bR^d)$
satisfy the Bedrosian identity (\ref{bedrosian2}).
\end{prop}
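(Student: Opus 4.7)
My plan is to prove Proposition \ref{halfsupport} by directly verifying the characterization (\ref{characeq}) from Theorem \ref{characterization}, exploiting the support hypothesis to kill most terms and the additive closure to handle the remainder. Fix $k\in\bN_{2^d}$ and $\xi\in Q_k$. Since $\supp\hat g\subseteq\cQ_j=\bigcup_{k'\in I_j}Q_{k'}$, the integral $\int_{Q_{k'}}\hat f(\xi-\eta)\hat g(\eta)d\eta$ vanishes as soon as $k'\notin I_j$. Consequently, in the defining sum of (\ref{characeq}) only indices $k'\in I_j$ survive, and because $m_{k'}=m_j$ for all $k'\in I_j$, that sum collapses to
\[
(m_k-m_j)\sum_{k'\in I_j,\, k'\ne k}\int_{Q_{k'}}\hat f(\xi-\eta)\hat g(\eta)d\eta.
\]

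A dichotomy on whether $k$ belongs to $I_j$ finishes the argument. If $k\in I_j$, then $m_k=m_j$ and the prefactor is zero. If instead $k\notin I_j$, I would restrict attention to $\xi$ in the interior of $Q_k$, which is a subset of $Q_k$ of full Lebesgue measure, and argue that the integrand $\hat f(\xi-\eta)\hat g(\eta)$ is identically zero for $\eta\in\cQ_j$. For suppose to the contrary that $\eta\in\cQ_j\cap\supp\hat g$ and $\xi-\eta\in\supp\hat f\subseteq\cQ_j$; then by the hypothesis that $\cQ_j$ is closed under addition one would obtain $\xi=(\xi-\eta)+\eta\in\cQ_j$. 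Yet the interior of $Q_k$ is disjoint from every $Q_{k'}$ with $k'\ne k$, so $k\notin I_j$ forces $\xi\notin\cQ_j$, a contradiction.

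Combining the two cases gives (\ref{characeq}) for almost every $\xi\in\bR^d$, which is exactly the condition Theorem \ref{characterization} equates with the Bedrosian identity. The only non-bookkeeping step is the additive-closure argument, which I view as the conceptual heart of the proof rather than a technical obstacle; the coordinate hyperplanes where distinct $Q_k$'s meet form a null set and so cause no trouble in the reduction to almost-everywhere equality.
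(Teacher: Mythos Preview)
Your proof is correct and follows essentially the same route as the paper's: both reduce via Theorem \ref{characterization}, use $\supp\hat g\subseteq\cQ_j$ to collapse the characterization to the single term $(m(\xi)-m_j)\int_{\cQ_j}\hat f(\xi-\eta)\hat g(\eta)\,d\eta$, and then argue by the dichotomy $\xi\in\cQ_j$ (so $m(\xi)=m_j$) versus $\xi\notin\cQ_j$ (so additive closure forces the integrand to vanish). Your explicit restriction to the interior of $Q_k$ to sidestep boundary overlaps is a minor technical refinement over the paper's version, but the conceptual core is identical.
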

\begin{proof} By Theorem \ref{bedrosiantheoremwq}, it suffices to
show that for all $\xi\in\bR^d$
\begin{equation}\label{halfsupporteq1}
\int_{\cQ_j}\hat{f}(\xi-\eta)\hat{g}(\eta)(m(\xi)-m_j)d\eta=0.
\end{equation}
Clearly, the above equation holds true for $\xi\in\cQ_j$. Assume
that there exists $\xi\in\bR^d\setminus\cQ_j$ that does not satisfy
(\ref{halfsupporteq1}). Thus, there must exist $\eta\in\cQ_j$ such
that $\xi-\eta\in\supp\hat{f}\subseteq\cQ_j$. Since $\cQ_j$ is closed under
addition, $\xi-\eta\in\cQ_j$ and $\eta\in\cQ_j$ imply that
$\xi\in\cQ_j$, a contradiction.
\end{proof}

In the one-dimensional case, the Bedrosian theorem has an appealing
physical interpretation. Namely, it states that if $f\in L^2(\bR)$
has low Fourier frequency and $g\in L^2(\bR)$ has high Fourier
frequency then they satisfy the Bedrosian identity
(\ref{bedrosian1}). Similarly, condition (\ref{bedrosiantheoremwq}) can be interpreted as that $f$ has low Fourier frequency in each
coordinate while $g$ is of high Fourier frequency in each
coordinate. It had been conjectured that the condition in the
Bedrosian theorem was necessary for the one-dimensional Bedrosian
identity (\ref{bedrosian1}) until an explicit contradicting example
was constructed in \cite{YZ2}. We recall that
$$
f(t):=\frac{1}{1+t^2},\ g(t):=\frac{1-2t^2}{4+5t^2+t^4},\ \ t\in\bR
$$
satisfy $H(fg)=fHg$ while $\supp\hat{f}=\supp\hat{g}=\bR$. Using
this example, we define
$$
F(x):=\prod_{j\in\bN_d}f(x_j),\ \ G(x):=\prod_{j\in\bN_d}g(x_j),\ \
x\in\bR^d.
$$
Since $T$ is of the form (\ref{composition}), $T(FG)=F(TG)$. This
together with $\supp\hat{F}=\supp\hat{G}=\bR^d$ implies that the
condition (\ref{bedrosiantheoremwq}) is also unnecessary for the
multidimensional Bedrosian identity (\ref{bedrosian2}).

Surprisingly, a necessity about the one-dimensional Bedrosian
theorem was obtained in \cite{YZ1}. It asserts that if $f,g\in
L^2(\bR)$ satisfy the Bedrosian identity (\ref{bedrosian1}), $\supp
\hat{f}\subseteq [a, b]$ for some $a,b\ge0$, and endpoints $a,b$ are
indeed contained in $\supp \hat{f}$ then $\supp\hat{g}\subseteq
\bR\setminus (-b,a)$. We shall present an extension of this result
for the multidimensional identity (\ref{bedrosian2}). To this end,
we recall that the convolution $\varphi*\psi$ of $\varphi,\psi\in
L^2(\bR^d)$ is defined by
$$
(\varphi*\psi)(x):=\int_{\bR^d}\varphi(x-t)\psi(t)dt,\ \ x\in\bR^d.
$$
It is well-known that $\supp \varphi*\psi\subseteq\supp
\varphi+\supp \psi$. In the case that both $\varphi,\psi$ are
compactly supported, we have the celebrated Titchmarsh convolution
theorem \cite{Lions,T}.

For each subset $A\subseteq\bR^d$, we denote by $\co A$ the convex
hull of $A$ in $\bR^d$.

\begin{lemma}\label{titchmarsh}
If both $\varphi,\psi\in L^2(\bR^d)$ are
compactly supported then
$$\co \supp(\varphi*\psi)=\co \supp \varphi+\co\supp \psi.$$
\end{lemma}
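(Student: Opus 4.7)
The plan is to prove the two inclusions separately: the forward inclusion $\co\supp(\varphi*\psi) \subseteq \co\supp\varphi + \co\supp\psi$ by standard support estimates, and the reverse inclusion (the substance of Titchmarsh's theorem) by reducing to the classical one-dimensional statement via a family of modulated slices. For the forward direction, I would invoke the elementary inclusion $\supp(\varphi*\psi) \subseteq \supp\varphi + \supp\psi$ and then the identity $\co(A+B) = \co A + \co B$, which holds because the Minkowski sum of two convex sets is itself convex.

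For the reverse inclusion, the central observation is that a nonempty convex compact set $K \subseteq \bR^d$ is determined by its support function $h_K(\xi) := \sup\{(x, \xi) : x \in K\}$, and that $h_{K_1+K_2} = h_{K_1} + h_{K_2}$. It therefore suffices to show, for every unit vector $\xi \in \bR^d$,
\begin{equation}
h_{\co\supp\varphi}(\xi) + h_{\co\supp\psi}(\xi) \le h_{\co\supp(\varphi*\psi)}(\xi).
\end{equation}
Fixing such a $\xi$, for each $\eta \in \xi^\perp$ I would define the modulated one-dimensional projections
\begin{equation}
\varphi_\eta(t) := \int_{\xi^\perp} \varphi(t\xi + y) e^{-i(\eta, y)} dy,
\end{equation}
and $\psi_\eta$, $(\varphi*\psi)_\eta$ analogously. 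These are compactly supported functions on $\bR$; a short Fubini computation gives $(\varphi*\psi)_\eta = \varphi_\eta *_1 \psi_\eta$, where $*_1$ denotes one-dimensional convolution, and one has $\sup\supp\varphi_\eta \le h_{\co\supp\varphi}(\xi)$ together with the analogous bounds for the other two slices.

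The final step is to apply the classical one-dimensional Titchmarsh theorem to obtain $\sup\supp(\varphi_\eta *_1 \psi_\eta) = \sup\supp\varphi_\eta + \sup\supp\psi_\eta$ whenever both slices are nontrivial, and then to chain this with the three inequalities above. The main obstacle will be choosing $\eta$ so that the slice suprema $\sup\supp\varphi_\eta$ and $\sup\supp\psi_\eta$ attain $h_{\co\supp\varphi}(\xi)$ and $h_{\co\supp\psi}(\xi)$ simultaneously. The modulation parameter is essential here: the unmodulated projection of $\varphi$ onto the $\xi$-axis can vanish identically by cancellation, but writing $\varphi_\eta(t) = \widehat{\varphi_t}(\eta)$ where $\varphi_t(y) := \varphi(t\xi + y)$ is the hyperplane slice of $\varphi$ at height $t$, one sees that for almost every $\eta \in \xi^\perp$ the function $\varphi_\eta$ is nonzero on every $t$-level where $\varphi_t \not\equiv 0$. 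A Fubini argument on a suitable neighborhood of the extremal level sets then yields a single $\eta$ that works for $\varphi$ and $\psi$ simultaneously, completing the proof.
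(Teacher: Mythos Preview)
The paper does not supply a proof of this lemma at all: it is stated as the ``celebrated Titchmarsh convolution theorem'' with citations to Lions and Titchmarsh, and is used as a black box in the proofs of Theorems~\ref{necessity} and~\ref{essnecesity}. So there is no in-paper argument to compare against.

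Your outline is a correct reconstruction of the classical proof, essentially the projection argument of Lions/Mikusi\'nski. The forward inclusion is routine. For the reverse inclusion, your reduction via support functions and modulated one-dimensional slices is the standard route: the identity $(\varphi*\psi)_\eta=\varphi_\eta*_1\psi_\eta$ is an easy Fubini computation, and the one-dimensional Titchmarsh theorem then gives what you need. The only place where your sketch is slightly loose is the final paragraph: the precise claim you want is that for almost every $\eta\in\xi^\perp$ one has $\sup\supp\varphi_\eta=h_{\co\supp\varphi}(\xi)$ (and similarly for $\psi$), which follows because for each $t$ with $\varphi_t\not\equiv0$ the slice $\varphi_t$ is compactly supported in $\xi^\perp$, so $\widehat{\varphi_t}$ is real-analytic and its zero set in $\eta$ has measure zero; Fubini on the product $\{t:\varphi_t\not\equiv0\}\times\xi^\perp$ then gives a full-measure set of good $\eta$, and intersecting with the corresponding set for $\psi$ yields a common choice. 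With that clarification the argument is complete.
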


\begin{theorem}\label{necessity} Let $\nu$ be an extreme point of
$[-1,1]^d$, $Q:=\{\xi\in\bR^d:\xi_j\nu_j\ge0,\ j\in\bN_d\}$,
$Q':=-Q$, $m|_{Q}\ne m|_{Q'}$, and $f,g\in L^2(\bR^d)$ satisfy
$\supp\hat{f},\supp\hat{g}\subseteq Q\cup Q'$. If
$a:=(a_j>0:j\in\bN_d)$ and $b:=(b_j>0:j\in\bN_d)$ that
$(a_j\nu_j:j\in\bN_d),\ (-b_j\nu_j:j\in\bN_d)\in\supp\hat{f}$ and
$$\supp\hat{f}\subseteq\{\xi\in Q:\xi_j\nu_j\le a_j,\
j\in\bN_d\}\cup\{\xi\in -Q:-\xi_j\nu_j\le b_j,\ j\in\bN_d\},$$ then
$g\in L^2(\bR^d)$ satisfies the Bedrosian identity
(\ref{bedrosian2}) if and only if
$$
\supp\hat{g}\subseteq\biggl\{\xi\in
Q:\sum_{j\in\bN_d}\frac{\nu_j\xi_j}{b_j}\ge
1\biggr\}\bigcup\biggl\{\xi\in
-Q:\sum_{j\in\bN_d}\frac{-\nu_j\xi_j}{a_j}\ge 1\biggr\}.
$$
\end{theorem}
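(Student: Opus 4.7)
The plan is to combine the characterization of the Bedrosian identity from Theorem~\ref{characterization} with the Titchmarsh convolution theorem (Lemma~\ref{titchmarsh}). Since $\supp \hat f$ and $\supp \hat g$ lie in $Q\cup Q'$ and $m$ is constant on each $d$-hyperoctant with $m|_Q\ne m|_{Q'}$, restricting (\ref{characeq}) to $\xi$ in the interior of $Q$ (respectively $Q'$) collapses the characterization to the pair of vanishing identities
$$(\hat f * (\hat g\chi_{Q'}))(\xi) = 0 \text{ for a.e.\ } \xi\in \mathrm{int}\, Q, \qquad (\hat f * (\hat g\chi_Q))(\xi) = 0 \text{ for a.e.\ } \xi\in \mathrm{int}\, Q'.$$
By the symmetry $Q\leftrightarrow Q'$, $a\leftrightarrow b$, it suffices to treat the first identity and to derive the claimed bound on the support of $\hat g\chi_{Q'}$.

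For the necessity direction I would split $\hat f = \hat f\chi_Q + \hat f\chi_{Q'}$; since $(\hat f\chi_{Q'})*(\hat g\chi_{Q'})$ is supported in $Q'+Q'\subseteq Q'$ and so vanishes on $\mathrm{int}\,Q$, the identity reduces to $((\hat f\chi_Q)*(\hat g\chi_{Q'}))(\xi)=0$ on $\mathrm{int}\,Q$. The componentwise bounds $\alpha_j\nu_j\in[0,a_j]$ on $\supp(\hat f\chi_Q)$ and $\beta_j\nu_j\leq 0$ on $\supp(\hat g\chi_{Q'})$, combined with this vanishing, confine $\supp((\hat f\chi_Q)*(\hat g\chi_{Q'}))$ to $\bigcup_{j\in\bN_d} R_j$, where $R_j:=\{\xi:\xi_k\nu_k\leq a_k\text{ for all }k,\ \xi_j\nu_j\leq 0\}$. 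Summing the coordinate inequalities shows each $R_j$, and hence the convex hull of their union, lies in the closed half-space $\{\xi:\sum_k\nu_k\xi_k/a_k\leq d-1\}$. Titchmarsh's theorem now yields
$$\co \supp ((\hat f\chi_Q)*(\hat g\chi_{Q'})) = \co \supp (\hat f\chi_Q) + \co \supp (\hat g\chi_{Q'}),$$
and the hypothesis $A:=(a_j\nu_j)_{j\in\bN_d}\in\supp \hat f$ places $A\in\co \supp(\hat f\chi_Q)$. Thus for any $\eta_0\in\supp(\hat g\chi_{Q'})$, the point $A+\eta_0$ lies in the convex hull and so in the half-space, giving
$$d+\sum_k \nu_k\eta_{0,k}/a_k=\sum_k\nu_k(A+\eta_0)_k/a_k\leq d-1,$$
which rearranges to $\sum_k -\nu_k\eta_{0,k}/a_k\geq 1$, as required. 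A symmetric argument using the other corner $(-b_j\nu_j)\in\supp \hat f$ together with the second vanishing constrains $\supp(\hat g\chi_Q)$.

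For the sufficiency direction, assuming the stated support condition on $\hat g$, I would return to the characterization (Theorem~\ref{characterization}) and verify that the integrals $\int_{Q'}\hat f(\xi-\eta)\hat g(\eta)\,d\eta$ and $\int_Q\hat f(\xi-\eta)\hat g(\eta)\,d\eta$ vanish for a.e.\ $\xi\in Q$ and $\xi\in Q'$ respectively, in the spirit of Proposition~\ref{bedrosiantheorem}: the simplex condition on $\hat g$ played against the box condition on $\hat f$ should push each integrand onto a negligible set.

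The main obstacle is the geometric step in the necessity argument: for $d\geq 2$ the complement $\bR^d\setminus \mathrm{int}\,Q$ is not convex, so Titchmarsh's theorem by itself yields no linear constraint on the support of the convolution. The crucial input is the box bound $\xi_j\nu_j\leq a_j$ inherited from $\supp(\hat f\chi_Q)$, which together with the orthant complement forces the support into the union $\bigcup_j R_j$ of half-infinite boxes; this union lies inside a single closed half-space, precisely tuned so that translation by the corner $A$ transports the linear inequality onto $\hat g\chi_{Q'}$ as the simplex condition in the statement.
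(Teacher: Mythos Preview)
Your necessity argument follows the same overall strategy as the paper's proof: reduce via Theorem~\ref{characterization} to the pair of convolution identities, peel off the irrelevant half of $\hat f$, confine the support of the remaining convolution to the half-space $\{\sum_k \nu_k\xi_k/a_k\le d-1\}$, and then invoke the Titchmarsh convolution theorem together with the corner $A=(a_j\nu_j)$. The geometric step you flagged as the main obstacle --- passing from the non-convex complement of $\mathrm{int}\,Q$ to a genuine half-space via the box bound on $\hat f\chi_Q$ --- is exactly what the paper does, and you have it right.

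The real gap is elsewhere: you apply Lemma~\ref{titchmarsh} to $(\hat f\chi_Q)*(\hat g\chi_{Q'})$, but that lemma requires \emph{both} factors to be compactly supported, and $\hat g\chi_{Q'}$ need not be. The paper repairs this by decomposing $\psi:=\hat g\chi_{Q'}$ as $\psi_1+\psi_2$, with $\psi_1$ supported in the compact box $\prod_j\{\,{-a_j}\le \nu_j\xi_j\le 0\,\}$ and $\psi_2$ supported in $Q'$ outside that box. For $\psi_2$ the simplex inequality $\sum_j(-\nu_j\xi_j)/a_j\ge 1$ is automatic (some coordinate already has $-\nu_j\xi_j\ge a_j$ and the others are nonnegative), so no Titchmarsh is needed. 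One checks that $(\hat f\chi_Q)*\psi_2$ is supported off $\mathrm{int}\,Q$, so the vanishing identity forces the same for $(\hat f\chi_Q)*\psi_1$; now both $\hat f\chi_Q$ and $\psi_1$ are compactly supported, Lemma~\ref{titchmarsh} applies legitimately, and your half-space computation finishes the job. Without this splitting your invocation of Titchmarsh is unjustified.

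A smaller issue: your sketch for sufficiency (``the simplex condition on $\hat g$ played against the box condition on $\hat f$ should push each integrand onto a negligible set'') does not work for $d\ge 2$. With $d=2$, $\nu=(1,1)$, $a=(1,1)$, the point $\eta=(-0.6,-0.5)\in Q'$ satisfies the simplex condition, yet $\xi-\eta=(1,1)$ lies in $\supp(\hat f\chi_Q)$ with $\xi=(0.4,0.5)\in\mathrm{int}\,Q$; the integrand need not vanish. The paper's proof in fact addresses only the necessity direction.
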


Before moving on to the proof, let us understand the conditions in
Theorem \ref{necessity}. Suppose that $\nu=\nu^1$. Consequently,
$Q=Q_1$ and $Q'=Q_2$. Theorem \ref{necessity} states that given
$f,g\in L^2(\bR^d)$ with $\supp\hat{g}\subseteq Q_1\cup Q_2$, if
$$
\supp\hat{f}\subseteq\prod_{j\in\bN_d}[0,a_j]\bigcup\prod_{j\in\bN_d}[-b_j,0],\mbox{
for some }a=(a_j>0:j\in\bN_d),\ b=(b_j>0:j\in\bN_d)
$$
and $a,-b$ are actually contained in $\supp\hat{f}$ then $g$
satisfies the Bedrosian identity if and only if any
$\xi\in\supp\hat{g}$ lies in
$$
\biggl\{\xi:\xi_j\ge0,\sum_{j\in\bN_d}\frac{\xi_j}{b_j}\ge
1\biggr\}\mbox{ or
}\biggl\{\xi:\xi_j\le0,\sum_{j\in\bN_d}\frac{\xi_j}{a_j}\le
-1\biggr\}.
$$
We illustrate the supports of $\hat{f}$ and $\hat{g}$ for the
two-dimensional case in the following graph.

\begin{center}
\scalebox{0.9}[0.6]{\includegraphics*{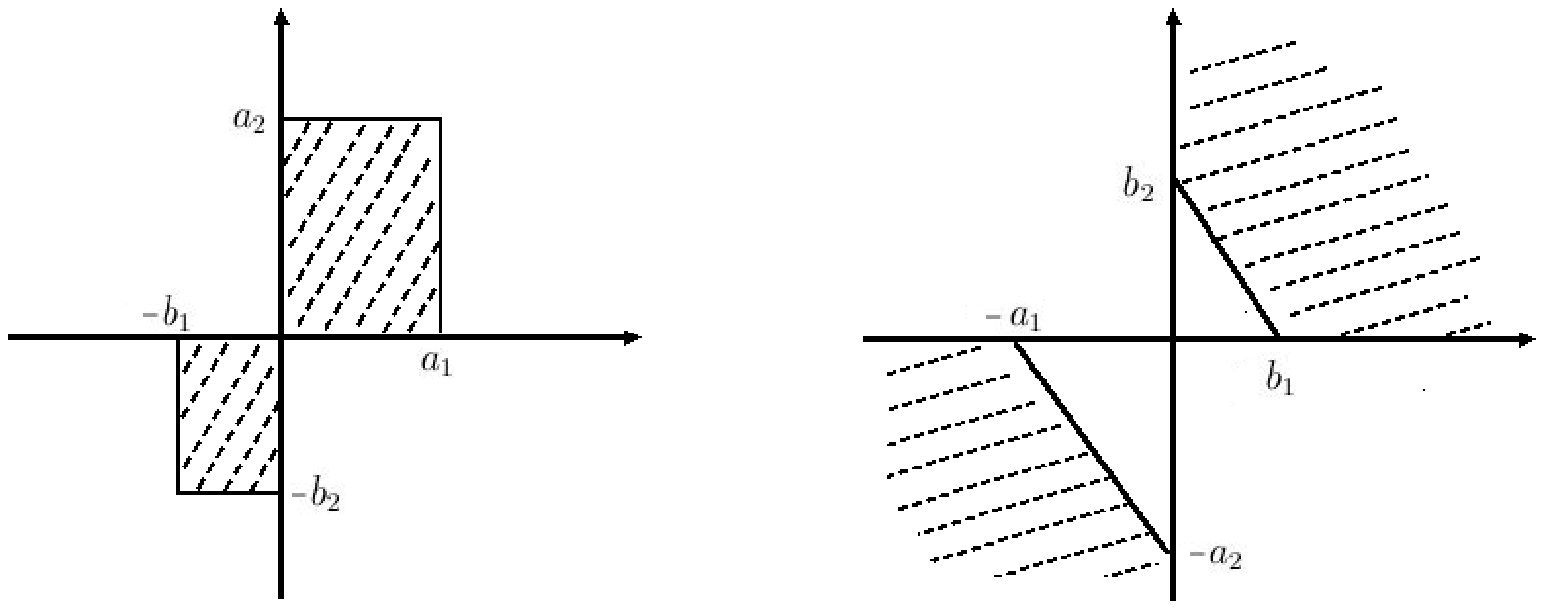}}
\end{center}
\begin{center}
{\footnotesize  {\bf  (a)}
$\supp\hat{f}$\quad\quad\quad\quad\quad\quad\quad\quad\quad\quad\quad\quad\quad\quad\quad\quad\quad\quad\quad\quad\quad
{\bf (b)} $\supp\hat{g}$}
\end{center}

We next present the proof of Theorem \ref{necessity}.

\begin{proof} Without loss of generality, assume that $Q=Q_1$ and
$Q'=Q_2$. By the assumptions that
$\supp\hat{f},\supp\hat{g}\subseteq Q_1\cup Q_2$ and $m_1\ne m_2$,
we obtain from Theorem \ref{characterization} that $g$ satisfies the
Bedrosian identity (\ref{bedrosian2}) if and only if
\begin{equation}\label{necessityeq1}
\int_{Q_2}\hat{f}(\xi-\eta)\hat{g}(\eta)d\eta=0,\ \ \xi\in Q_1
\end{equation}
and
$$
\int_{Q_1}\hat{f}(\xi-\eta)\hat{g}(\eta)d\eta=0,\ \ \xi\in Q_2.
$$
We shall only prove that $\supp(\hat{g}\chi_{Q_2})\subseteq \{\xi\in
Q_2:\sum_{j\in\bN_d}\frac{\xi_j}{a_j}\le -1\}$ since the other
inclusion that $\supp(\hat{g}\chi_{Q_1})\subseteq \{\xi\in
Q_1:\sum_{j\in\bN_d}\frac{\xi_j}{b_j}\ge 1\}$ can be handled in a
similar way. Set $\varphi:=\hat{f}\chi_{Q_1}$ and
$\psi:=\hat{g}\chi_{Q_2}$. We decompose $\psi$ uniquely into
$$
\psi=\psi_1+\psi_2
$$
where $\supp\psi_1\subseteq\prod_{j\in\bN_d}[-a_j,0]$ and
$\supp\psi_2\subseteq Q_2\setminus\prod_{j\in\bN_d}(-a_j,0)$. By
equation (\ref{necessityeq1}),
$$
(\varphi*\psi_1)(\xi)=-(\varphi*\psi_2)(\xi),\ \xi\in Q_1.
$$
Since
$$
\supp(\varphi*\psi_2)\subseteq\supp\varphi+\supp\psi_2\subseteq
Q_2\setminus\prod_{j\in\bN_d}[-a_j,0]+\prod_{j\in\bN_d}[0,a_j]\subseteq
\bR^d\setminus Q_1^o,
$$
where $Q_1^o$ is the interior of $Q_1$, we have that
$$
\supp(\varphi*\psi_1)\subseteq \bR^d\setminus Q_1^o.
$$
This together with
$$
\supp(\varphi*\psi_1)\subseteq\supp\varphi+\supp\psi\subseteq\prod_{j\in\bN_d}[-a_j,a_j]
$$
implies that
\begin{equation}\label{necessityeq2}
\co(\varphi*\psi_1)\subseteq\{\xi:-a_j\le\xi_j\le a_j,\ j\in\bN_d,\
\sum_{j\in\bN_d}\frac{\xi_j}{a_j}\le d-1\}.
\end{equation}
We claim that for any $\xi\in\supp\psi_1$,
$\sum_{j\in\bN_d}\frac{\xi_j}{a_j}\le -1$. Assume to the contrary
that this is invalid. Then there exists $\xi^0\in \co\supp\psi_1$
such that $\sum_{j\in\bN_d}\frac{\xi^0_j}{a_j}> -1$. Note that
$\xi^1:=a\in\supp\varphi$, which satisfies
$\sum_{j\in\bN_d}\frac{\xi^1_j}{a_j}=d$. By Lemma \ref{titchmarsh},
$$
\co\supp(\varphi*\psi_1)=\co\supp\psi_1+\co\supp\varphi.
$$
It follows that there exists some point
$\xi\in\co\supp(\varphi*\psi_1)$ such that
$$
\sum_{j\in\bN_d}\frac{\xi_j}{a_j}>d-1,
$$
contradicting (\ref{necessityeq2}). We conclude that
$$
\supp\psi_1=\supp(\hat{g}\chi_{Q_2})\subseteq\{\xi\in
Q_2:\sum_{j\in\bN_d}\frac{\xi_j}{a_j}\le -1\},
$$
which completes the proof.
\end{proof}

By a direct application of the Titchmarsh convolution theorem, one obtains another necessity theorem analog to the one-dimensional one proved in \cite{YZ1}.

\begin{theorem}\label{essnecesity}
Let $f\in L^2(\bR^d)$ satisfy for  some $a:=(a_j\ge0:j\in\bN_d)$ and
$b:=(b_j\ge0:j\in\bN_d)$ that
$$
\supp\hat{f}\subseteq\prod_{j\in\bN_d}[-a_j,b_j]
$$
and all the extreme points of $\prod_{j\in\bN_d}[-a_j,b_j]$ are contained in $\supp\hat{f}$. Then $g\in\L^2(\bR^d)$ with $\hat{g}$ being compactly supported satisfies the Bedrosian identity $T(fg)=fTg$ for all operators $T$ of the form (\ref{composition}) if and only if
\begin{equation}\label{essnecesitycond2}
\supp\hat{g}\subseteq\prod_{j\in\bN_d}\bR\setminus(-b_j,a_j).
\end{equation}
\end{theorem}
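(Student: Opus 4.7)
The proof splits into the routine sufficiency and the substantive necessity. For sufficiency, assuming \eqref{essnecesitycond2}, the hypothesis on $\supp\hat{f}$ together with this inclusion puts the pair $(f,g)$ directly into the scope of Proposition \ref{bedrosiantheorem}, so the Bedrosian identity holds for every $T$ of the form \eqref{composition}.

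For necessity, the plan is to exploit the fact that \eqref{bedrosian2} is assumed for \emph{every} choice of multipliers $(m_k)_{k\in\bN_{2^d}}$, and then to convert the resulting vanishing relations into support constraints via the Titchmarsh theorem. First, I fix $j_0\in\bN_{2^d}$ and specialize the criterion of Theorem \ref{characterization} to the operator whose multiplier takes the value $1$ on $Q_{j_0}$ and $0$ elsewhere. Reading off \eqref{characeq} at points $\xi\in Q_k$ with $k\ne j_0$, the whole sum collapses to the single term $-\int_{Q_{j_0}}\hat{f}(\xi-\eta)\hat{g}(\eta)d\eta$, which must therefore vanish for every such $\xi$. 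Writing $\psi_{j_0}:=\hat{g}\chi_{Q_{j_0}}$, this is precisely the statement that $\supp(\hat{f}*\psi_{j_0})\subseteq Q_{j_0}$.

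Since $\hat{g}$ is compactly supported by hypothesis and $\hat{f}$ is supported in the box $B:=\prod_{j\in\bN_d}[-a_j,b_j]$, both $\hat{f}$ and $\psi_{j_0}$ are compactly supported and Lemma \ref{titchmarsh} yields
$$\co\supp\hat{f}+\co\supp\psi_{j_0}\subseteq\co Q_{j_0}=Q_{j_0}.$$
The hypothesis that $\supp\hat{f}\subseteq B$ and that $\supp\hat{f}$ contains every extreme point of $B$ forces $\co\supp\hat{f}=B$, whence $B+\co\supp\psi_{j_0}\subseteq Q_{j_0}$. A coordinatewise analysis then finishes the argument: for any $\xi\in\co\supp\psi_{j_0}$ and any $j\in\bN_d$, the sign $\nu_j^{j_0}\in\{\pm 1\}$ selects the endpoint $y_j\in\{-a_j,b_j\}$ that yields the sharpest constraint on $\xi_j$ from membership of $y+\xi$ in $Q_{j_0}$. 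Taking $y_j=-a_j$ when $\nu_j^{j_0}=1$ forces $\xi_j\ge a_j$, while $y_j=b_j$ when $\nu_j^{j_0}=-1$ forces $\xi_j\le -b_j$. In either case $\xi_j\notin(-b_j,a_j)$, so $\supp\psi_{j_0}\subseteq\prod_{j\in\bN_d}(\bR\setminus(-b_j,a_j))$. Taking the union over $j_0\in\bN_{2^d}$, whose orthants cover $\bR^d$, yields \eqref{essnecesitycond2}.

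The main conceptual obstacle is the first step: recognizing that quantifying over \emph{all} $T$ of the form \eqref{composition} provides exactly the freedom needed to decouple the multi-term condition \eqref{characeq} into the independent single-orthant identities $\hat{f}*(\hat{g}\chi_{Q_{j_0}})\equiv 0$ off $Q_{j_0}$. Once that reduction is in hand, everything else is a geometric consequence of Titchmarsh together with the maximality $\co\supp\hat{f}=B$ supplied by the extreme-point hypothesis.
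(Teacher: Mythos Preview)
Your argument is correct, and the overall strategy—specialize $T$ to extract orthant-by-orthant convolution constraints, then invoke Titchmarsh together with $\co\supp\hat f=\prod_j[-a_j,b_j]$—is sound. The proof differs from the paper's, however, in the choice of generating family of operators. The paper specializes to the $d$ partial Hilbert transforms $H_j$ (multiplier $-i\sgn\xi_j$), which splits $\hat f$ and $\hat g$ by the half-spaces $\bR^d_{j\pm}:=\{\xi:\pm\xi_j\ge0\}$; from $H_j(fg)=fH_jg$ and Theorem \ref{characterization} one gets $\supp(\hat{f}_{j+}\!*\hat{g}_{j-})\subseteq\bR^d_{j-}$ and its mirror, and then Titchmarsh plus the extreme-point hypothesis yields the single-coordinate constraint $\xi_j\notin(-b_j,a_j)$ directly. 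You instead specialize to the $2^d$ orthant indicators, obtaining $\supp(\hat f*\psi_{j_0})\subseteq Q_{j_0}$, and then read off all $d$ coordinate constraints simultaneously from $B+\co\supp\psi_{j_0}\subseteq Q_{j_0}$. The paper's route is slightly leaner (it uses $d$ rather than $2^d$ specializations and isolates one coordinate per operator), while yours stays closer to the orthant language of Theorem \ref{characterization}; both rest on the same Titchmarsh lemma and the same use of the extreme-point assumption to pin down $\co\supp\hat f$.
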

\begin{proof}
The sufficiency has been proved in Proposition \ref{bedrosiantheorem}. One sees that $g$ satisfies $T(fg)=fTg$ for all operators $T$ given by (\ref{composition}) if and only if for all $j\in\bN_d$
\begin{equation}\label{essnecesityeq1}
H_j(fg)=fH_j(g).
\end{equation}
Set $\bR^d_{j+}:=\{\xi\in\bR^d:\xi_j\ge0\}$ and $\bR^d_{j-}:=\{\xi\in\bR^d:\xi_{j}\le0\}$. Introduce for each $\varphi\in L^2(\bR^d)$ the associated pair of functions defined by
$$
(\varphi_{j+})\hat{\,}=\left\{
\begin{array}{ll}
\hat{\varphi}(\xi),&\xi\in\bR^d_{j+},\\
0,&\mbox{elsewhere},
\end{array}
\right.\quad
(\varphi_{j-})\hat{\,}=\left\{
\begin{array}{ll}
\hat{\varphi}(\xi),&\xi\in\bR^d_{j-},\\
0,&\mbox{elsewhere},
\end{array}
\right.
$$
By Theorem \ref{characterization}, identity (\ref{essnecesityeq1}) implies
$$
\supp(\hat{f_{j+}}*\hat{g_{j-}})\subseteq R^d_{j-},\ \ \supp(\hat{f_{j-}}*\hat{g_{j+}})\subseteq R^d_{j+}.
$$
By Lemma \ref{titchmarsh}, we get
$$
\supp(\hat{f_{j+}})+\supp(\hat{g_{j-}})\subseteq R^d_{j-},\ \ \supp(\hat{f_{j-}})+\supp(\hat{g_{j+}})\subseteq R^d_{j+}.
$$
Since the exterme points of $\prod_{j\in\bN_d}[-a_j,b_j]$ lie in $\supp\hat{f}$, there exist $\xi\in\supp\hat{f_{j+}}$ and $\eta\in\hat{f_{j-}}$ with $\xi_j=b_j$ and $\eta_j=-a_j$. This together with the above equation imply
$$
\supp\hat{g}\subseteq\{\xi\in\bR^d:\xi_j\in\bR\setminus(-b_j,a_j).
$$
Since this is true for each $j\in\bN_d$, we get (\ref{essnecesitycond2}).
\end{proof}
\section{Basic Multidimensional Analytic Signals}
\setcounter{equation}{0}

Assume that an operator $T$ of the form (\ref{composition}) has been
chosen to define multidimensional analytic signals. Applying $T$
directly to a given signal would generally not yield physically
meaningful results for the reason that the signal may contain
multiple components. It is desirable to first decompose the signal
into a sum of basic signals that behave well under $T$, and then
apply the operator $T$ to each summand. Note that $T$ given by
(\ref{composition}) is a linear combination of compositions of
partial Hilbert transforms. Thus, the purpose of this section is to
construct basic signals that will behave well under each partial
Hilbert transforms. More precisely, we shall first construct
one-dimensional signals in
$$
\cM:=\{\rho\cos\theta: \rho\in L^2(\bR),\ \theta\in C^1(\bR),\
\rho\ge0,\ \theta'\ge0,\ H(\rho\cos\theta)=\rho\sin\theta\}.
$$
Multidimensional basic signals can then be formed by tensor products
of functions in $\cM$.

Constructions of functions $\rho\cos\theta$ in $\cM$ has recently
been considered in \cite{QWXZ}, where the phase function $\theta$
was selected to be a strictly increasing function on $\bR$.
Consequently, the constructed signal $\rho\cos\theta$ does not
possess much fluctuation, which, on the other hand, is usually
required in the time-frequency analysis. Thus, we shall adopt a
different approach by choosing periodic phase functions that still
enjoy nonnegative derivative. Specifically, we shall use phase
functions determined by a finite Blaschke product.

Set $\bU:=\{z\in\bC:|z|<1\}$ and let $\ba:=(a_j:j\in\bN_n)\in
\bU^n$. The finite Blaschke product $\cB_\ba$ associated with $\ba$
is a holomorphic function on $\bU$ defined as
$$
\cB_\ba(z):=\prod_{j\in\bN_n}\frac{z-a_j}{1-\overline{a_j}z},\ \
z\in\bU.
$$
We let $\theta_\ba$ be the phase function determined by $\cB_\ba$ as
$$
e^{i\theta_\ba(t)}=\cB_\ba(e^{it}), \ t\in\bR,
$$
and aim at characterizing all real functions $f\in L^2(\bR)$ such
that
\begin{equation}\label{ftheta}
H(f\cos\theta_\ba)=f \sin\theta_\ba.
\end{equation}

We start with a straightforward observation.

\begin{lemma}\label{equivalence}
A real function $f\in L^2(\bR)$ satisfies (\ref{ftheta}) if and only
if
\begin{equation}\label{singleeq1}
H(fe^{i\theta_\ba})=-i fe^{i\theta_\ba}.
\end{equation}
\end{lemma}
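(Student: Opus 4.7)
The plan is to reduce the equivalence to the $\bC$-linearity of $H$ on $L^2(\bR)$ together with the classical involution identity $H^2=-I$. Note first that since $f\in L^2(\bR)$ is real and $\theta_\ba$ is a real-valued phase, both $f\cos\theta_\ba$ and $f\sin\theta_\ba$ lie in $L^2(\bR)$ (the trigonometric factors are bounded by $1$), and $fe^{i\theta_\ba}\in L^2(\bR;\bC)$. By linearity,
\begin{equation*}
H(fe^{i\theta_\ba})=H(f\cos\theta_\ba)+iH(f\sin\theta_\ba).
\end{equation*}
On the other hand,
\begin{equation*}
-ife^{i\theta_\ba}=-if(\cos\theta_\ba+i\sin\theta_\ba)=f\sin\theta_\ba-if\cos\theta_\ba.
\end{equation*}
Hence (\ref{singleeq1}) is equivalent to the \emph{pair} of real identities
\begin{equation*}
H(f\cos\theta_\ba)=f\sin\theta_\ba,\qquad H(f\sin\theta_\ba)=-f\cos\theta_\ba.
\end{equation*}

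The direction (\ref{singleeq1}) $\Rightarrow$ (\ref{ftheta}) is then immediate by reading off the real part. For the converse, I would apply $H$ to both sides of (\ref{ftheta}) and invoke the fact that $H^2=-I$ on $L^2(\bR)$, which follows at once from the Fourier multiplier characterization (\ref{hilbertsgn}) since $(-i\sgn(\xi))^2=-1$ for a.e.\ $\xi$. This yields
\begin{equation*}
-f\cos\theta_\ba=H^2(f\cos\theta_\ba)=H(f\sin\theta_\ba),
\end{equation*}
i.e.\ the second of the two real identities above. Combined with the hypothesis, this recovers (\ref{singleeq1}).

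There is no real obstacle here; the only care needed is to make sure one is entitled to treat $H$ as a $\bC$-linear bounded operator on $L^2(\bR;\bC)$ (which is standard, e.g.\ from its Fourier multiplier definition) and to verify that the functions to which $H$ is applied genuinely belong to $L^2(\bR)$. Both points are handled by the boundedness of $\cos\theta_\ba$ and $\sin\theta_\ba$ and the hypothesis $f\in L^2(\bR)$. This short argument is what I expect the author's proof to be as well.
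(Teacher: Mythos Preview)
Your proof is correct and follows essentially the same route as the paper's: both directions hinge on the $\bC$-linearity of $H$, the involution $H^2=-I$ (applied to (\ref{ftheta}) to obtain $H(f\sin\theta_\ba)=-f\cos\theta_\ba$), and the fact that $H$ maps real functions to real functions so that real and imaginary parts can be compared. The only difference is cosmetic---you make the use of $H^2=-I$ explicit and check $L^2$-membership, which the paper leaves implicit.
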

\begin{proof}
Let $f\in L^2(\bR)$ be real. Assume that (\ref{ftheta}) holds true
then we apply $H$ to both sides of (\ref{ftheta}) to get that
$$
-f\cos\theta_\ba=H(f\sin\theta_\ba).
$$
Combining the above equation with (\ref{ftheta}) yields that
$$
H(fe^{i\theta_\ba})=H(f\cos\theta_\ba)+iH(f\sin\theta_\ba)=f\sin\theta_\ba-if\cos\theta_a=-i
fe^{i\theta_\ba},
$$
which is (\ref{singleeq1}). Conversely, if (\ref{singleeq1}) holds
true then by comparing the real part of its both sides, we obtain
(\ref{ftheta}). Thus, it suffices to show that (\ref{singleeq1}) is
equivalent to (\ref{singleeq}).
\end{proof}

We first deal with the situation when $\ba$ is a singleton $\{a\}$,
$a\in\bU$. In this case, we abbreviate $\theta_\ba$ as $\theta_a$.
Denote by $\tau$ the backshift operator defined by
$$
(\tau f)(t)=f(t-1),\ \ t\in\bR.
$$

\begin{lemma}\label{single}
If $\ba$ consists of a single point $a$ for some $a\in\bU$ then a
function $f\in L^2(\bR)$ satisfies (\ref{singleeq1}) if and only if
\begin{equation}\label{singleeq}
(\tau \hat{f})(\xi)=a\hat{f}(\xi),\mbox{ for all }\xi\le 0.
\end{equation}
\end{lemma}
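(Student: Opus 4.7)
The plan is to convert everything to the Fourier side. Since $|e^{i\theta_a(t)}|=1$ for all $t\in\bR$, $g:=fe^{i\theta_a}\in L^2(\bR)$, and by the multiplier formula (\ref{hilbertsgn}) the identity (\ref{singleeq1}) $Hg=-ig$ is equivalent to $\hat g(\xi)=0$ for almost every $\xi<0$. The whole lemma therefore reduces to showing that this Fourier-support condition on $g$ is equivalent to the one-step recursion $\hat f(\xi-1)=a\hat f(\xi)$ for almost every $\xi\le 0$.

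Next I would exploit the algebraic simplicity of a single Blaschke factor. From the definition $e^{i\theta_a(t)}=(e^{it}-a)/(1-\bar a e^{it})$ one has the pointwise identity $g(1-\bar a e^{it})=f(e^{it}-a)$, which rearranges to $af+g=e^{it}(f+\bar a g)$. Taking Fourier transforms (so that multiplication by $e^{it}$ becomes the backshift $\tau$) produces the clean linear recursion
\[
\hat f(\xi-1)-a\hat f(\xi)=\hat g(\xi)-\bar a\hat g(\xi-1)\quad\text{for a.e. }\xi\in\bR.
\]
Both implications of the lemma drop out of this identity.

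For the forward direction, assume $\hat g$ vanishes on $(-\infty,0)$: at any $\xi<0$ both $\xi$ and $\xi-1$ are negative, so the right-hand side of the displayed recursion is zero, giving (\ref{singleeq}) on $\xi\le 0$ up to a null set. For the converse, assume (\ref{singleeq}): the left-hand side of the recursion then vanishes for a.e.\ $\xi\le 0$, leaving $\hat g(\xi)=\bar a\hat g(\xi-1)$. Iterating yields $\hat g(\xi)=\bar a^{n}\hat g(\xi-n)$ for every $n\in\bN$ and a.e.\ $\xi<0$. The only non-routine step is to turn this recursion into an actual vanishing statement; I would integrate $|\hat g|^2$ over $(-1,0)$, change variables to rewrite that integral as $|a|^{2n}\int_{-n-1}^{-n}|\hat g|^2\le |a|^{2n}\|g\|_{L^2(\bR)}^2$, and let $n\to\infty$, which forces $\hat g=0$ a.e.\ on $(-1,0)$. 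Repeating the argument on each $(-k-1,-k)$ gives $\hat g=0$ a.e.\ on $(-\infty,0)$, and this $L^2$ tail estimate is exactly where the hypothesis $a\in\bU$ (i.e.\ $|a|<1$) is essential.
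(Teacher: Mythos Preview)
Your argument is correct and is a genuinely different route from the paper's. The paper expands the single Blaschke factor as a geometric series in $e^{it}$ to obtain the infinite-sum formula
\[
(fe^{i\theta_a})\hat{\,}=-a\hat f+(1-|a|^2)\sum_{k\ge 1}\bar a^{\,k-1}\tau^k\hat f,
\]
and then proves the equivalence with (\ref{singleeq}) by algebraic manipulation of this series (substituting the recursion in one direction, and applying $\tau$ then subtracting in the other). You instead clear denominators at the outset, obtaining the finite two-term identity $\hat f(\xi-1)-a\hat f(\xi)=\hat g(\xi)-\bar a\hat g(\xi-1)$, which is cleaner and avoids any infinite series. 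The price you pay is that the converse no longer follows by pure algebra: you need the $L^2$ tail estimate $\int_{-k-1}^{-k}|\hat g|^2=|a|^{2n}\int_{-k-1-n}^{-k-n}|\hat g|^2\to 0$ to upgrade the recursion $\hat g(\xi)=\bar a\hat g(\xi-1)$ to actual vanishing. Both arguments use $|a|<1$ in an essential way---the paper for convergence of the geometric series, you for the decay---so neither is strictly more elementary, but your identity is arguably the more transparent mechanism and generalizes readily (it is essentially the factored form that underlies Lemma~\ref{multiple}).
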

\begin{proof} Let $f\in L^2(\bR)$. We compute that
\begin{equation}\label{singleeq2}
(fe^{i\theta_a})\hat{\,}=-a\hat{f}+(1-|a|^2)\sum_{k\in\bN}{\bar{a}}^{k-1}\tau^k\hat{f}.
\end{equation}
By taking the Fourier transform of both sides of (\ref{singleeq1})
and engaging (\ref{hilbertsgn}), we see that (\ref{singleeq1}) is
equivalent to
$$
(fe^{i\theta_a})\hat{\,}(\xi)=0,\ \ \xi\le 0.
$$
By (\ref{singleeq2}), the above equation can be rewritten as
\begin{equation}\label{singleeq3}
-a\hat{f}(\xi)+(1-|a|^2)\sum_{k\in\bN}{\bar{a}}^{k-1}(\tau^k\hat{f})(\xi)=0,\
\ \xi\le0.
\end{equation}
It remains to prove that (\ref{singleeq3}) and (\ref{singleeq})
imply each other. Assume that (\ref{singleeq}) holds true. Then we
observe for all $\xi\le 0$ that
$$
\begin{array}{rl}
\displaystyle{-a\hat{f}(\xi)+(1-|a|^2)\sum_{k\in\bN}{\bar{a}}^{k-1}\tau^k\hat{f}(\xi)}&\displaystyle{=-a\hat{f}(\xi)+(1-|a|^2)\sum_{k\in\bN}{\bar{a}}^{k-1}
a^k\hat{f}(\xi)}\\
&\displaystyle{=a\hat{f}(\xi)\left(-1+(1-|a|^2)\sum_{k\in\bN}(|a|^2)^{k-1}\right)=0}.
\end{array}
$$
On the other hand, assume that (\ref{singleeq3}) holds. We apply
$\tau$ to the left hand side of (\ref{singleeq3}) and then multiply
it by $\bar{a}$ to obtain that
$$
-|a|^2(\tau\hat{f})(\xi)+(1-|a|^2)\sum_{k=2}^\infty{\bar{a}}^{k-1}(\tau^k\hat{f})(\xi)=0,\
\ \xi\le0.
$$
Subtracting the above equation from (\ref{singleeq3}) gives
$$
-a\hat{f}(\xi)+|a|^2(\tau\hat{f})(\xi)+(1-|a|^2)(\tau\hat{f})(\xi)=0,\
\ \xi\le0.
$$
An rearrangement of the above equation yields (\ref{singleeq}) and
completes the proof.
\end{proof}

We next present the crucial lemma leading to the characterization of
real $f$ satisfying (\ref{ftheta}). To this end, we denote for
$\ba:=(a_j:j\in\bN_n)\in\bU^n$ by $\tau_\ba$ the operator
$$
(\tau_\ba f)=\biggl(\prod_{j\in\bN_n}(\tau-a_j)\biggr)f.
$$

\begin{lemma}\label{multiple} A function $f\in L^2(\bR)$ satisfies
(\ref{singleeq1}) if and only if
\begin{equation}\label{multipleeq}
(\tau_\ba\hat{f})(\xi)=0,\ \ \mbox{ for all }\xi\le0.
\end{equation}
\end{lemma}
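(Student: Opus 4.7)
The plan is to reduce Lemma~\ref{multiple} to a single operator identity in the frequency domain, from which the forward direction is immediate and the backward direction follows from an $L^2$ exponential-growth argument that uses $|a_j|<1$ in an essential way.

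First I would establish the key identity. Writing the finite Blaschke product as $\cB_\ba(z)=P(z)/Q(z)$ with $P(z):=\prod_{j\in\bN_n}(z-a_j)$ and $Q(z):=\prod_{j\in\bN_n}(1-\overline{a_j}z)$, and putting $h:=fe^{i\theta_\ba}$, the relation $e^{i\theta_\ba(t)}=\cB_\ba(e^{it})$ gives the pointwise equality $Q(e^{it})h(t)=P(e^{it})f(t)$ on $\bR$. Because multiplication by $e^{ikt}$ in the time variable corresponds to the shift $\tau^k$ on the Fourier side (i.e.\ $(fe^{ikt})\hat{\,}=\tau^k\hat{f}$), taking Fourier transforms of both sides yields
\begin{equation}
\label{proofplankey}
Q(\tau)\hat{h}=P(\tau)\hat{f}=\tau_\ba\hat{f},
\end{equation}
where $Q(\tau)=\prod_{j\in\bN_n}(I-\overline{a_j}\tau)$ and $P(\tau)=\tau_\ba$. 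Note that by (\ref{hilbertsgn}), condition (\ref{singleeq1}) is equivalent to $\hat{h}(\xi)=0$ for a.e.\ $\xi\le 0$.

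The forward direction is then immediate from (\ref{proofplankey}): if $\hat{h}$ vanishes on $(-\infty,0]$, then for every $\xi\le 0$ and $k\in\bN_n$ the point $\xi-k$ is still $\le 0$, so $Q(\tau)\hat{h}(\xi)=0$ a.e.\ on $(-\infty,0]$, giving $\tau_\ba\hat{f}(\xi)=0$ for a.e.\ $\xi\le 0$. For the backward direction, the crucial step is the following claim: whenever $\chi\in L^2(\bR)$ and $(I-\overline{a}\tau)\chi(\xi)=0$ a.e.\ $\xi\le 0$ for some $a\in\bU$, one must have $\chi=0$ a.e.\ on $(-\infty,0]$. The case $a=0$ is trivial; for $a\ne 0$ the recursion gives $\chi(\xi-k)=\overline{a}^{-k}\chi(\xi)$ for a.e.\ $\xi\in[-1,0]$ and $k\ge 0$, so
$$
\int_{-\infty}^0|\chi(\xi)|^2 d\xi=\sum_{k\ge 0}|a|^{-2k}\int_{-1}^0|\chi(\eta)|^2 d\eta,
$$
and since $|a|^{-2}>1$ the right side is finite only when $\chi\equiv 0$ a.e.\ on $[-1,0]$; the recursion then extends this to $(-\infty,0]$. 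Given this claim, assume $\tau_\ba\hat{f}(\xi)=0$ for $\xi\le 0$, so by (\ref{proofplankey}) $Q(\tau)\hat{h}(\xi)=0$ for a.e.\ $\xi\le 0$. Writing $Q(\tau)=(I-\overline{a_n}\tau)\prod_{j<n}(I-\overline{a_j}\tau)$ and applying the claim to $\chi:=\prod_{j<n}(I-\overline{a_j}\tau)\hat{h}\in L^2(\bR)$ reduces the order of the factored operator by one; iterating this peeling $n$ times forces $\hat{h}(\xi)=0$ a.e.\ $\xi\le 0$, which is (\ref{singleeq1}).

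The main obstacle I expect is precisely the half-line recursion claim in the backward direction, i.e.\ ruling out nonzero $L^2$ solutions of $(I-\overline{a}\tau)\chi=0$ on $(-\infty,0]$. This is where the hypothesis $\ba\in\bU^n$ enters essentially: each characteristic root $1/\overline{a_j}$ of $Q$ lies strictly outside the unit circle, forcing solutions to blow up at $-\infty$. Without the strict inequality $|a_j|<1$ the argument collapses, and the summation estimate above loses its geometric factor.
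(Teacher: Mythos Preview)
Your argument is correct and takes a genuinely different route from the paper's. The paper proceeds by induction on $n$: it writes $fe^{i\theta_\ba}=ge^{i\theta_{\ba'}}$ with $g:=fe^{i\theta_{a_n}}$ and $\ba':=(a_1,\dots,a_{n-1})$, applies the inductive hypothesis to obtain $(\tau_{\ba'}\hat{g})(\xi)=0$ for $\xi\le0$, and then invokes the explicit series expansion $\hat{g}=-a_n\hat{f}+(1-|a_n|^2)\sum_{k\ge1}\overline{a_n}^{\,k-1}\tau^k\hat{f}$ together with the subtraction trick from the single-factor Lemma~\ref{single} to pass between this and $(\tau_\ba\hat{f})(\xi)=0$. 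You instead establish the single polynomial identity $Q(\tau)\hat{h}=P(\tau)\hat{f}$, which makes the forward direction a one-line consequence of the fact that $Q(\tau)$ involves only nonnegative powers of $\tau$, and then handle the converse by peeling off the factors $(I-\overline{a_j}\tau)$ of $Q(\tau)$ via the $L^2$ growth argument. Your approach is more self-contained (it does not rely on the series computation (\ref{singleeq2}) or on Lemma~\ref{single} at all) and it isolates exactly where $|a_j|<1$ is used: each characteristic root $1/\overline{a_j}$ of $Q$ lies outside the closed disk, so nontrivial $L^2$ solutions on $(-\infty,0]$ are excluded. The paper's inductive argument, by contrast, hides this dependence inside the convergence of the geometric series for $\hat{g}$ and inside the single-factor lemma, but has the virtue of building directly on the $n=1$ case already established.
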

\begin{proof} We shall use the induction on the number $n$ of
factors in the Blaschke product $\cB_\ba$. By Lemma \ref{single},
the result holds true when $n=1$. Assume that $n\ge2$ and the result
also holds for the $n-1$ case. Let $\ba':=(a_j:1\le j\le n-1)\in
\bU^{n-1}$ and $g=fe^{i\theta_{a_{n}}}$. Then (\ref{singleeq1}) is
equivalent to
$$
H(ge^{i\theta_{\ba'}})=-ige^{i\theta_{\ba'}}.
$$
By induction, the above equation holds true if and only if
$$
(\tau_{\ba'}\hat{g})(\xi)=0,\ \ \mbox{ for all }\xi\le 0.
$$
We have computed that
$$
\hat{g}=-a_n\hat{f}+(1-|a_n|^2)\sum_{k\in\bN}{\overline{a_n}}^{k-1}\tau^k\hat{f}.
$$
Therefore, it suffices to show that (\ref{multipleeq}) is equivalent
to
$$
-a_n\tau_{\ba'}\hat{f}(\xi)+(1-|a_n|^2)\tau_{\ba'}\sum_{k\in\bN}{\overline{a_n}}^{k-1}\tau^k\hat{f}(\xi)=0,\
\ \mbox{ for all }\xi\le0.
$$
Similar arguments as those in the proof of Lemma \ref{single}
fulfils this purpose.
\end{proof}

We conclude the results of this section by far into the following
theorem. Note that $\iota:=\tau^{-1}$ is the shift operator defined
as
$$
(\iota f)(t)=f(t+1),\ \ t\in\bR.
$$
We introduce another operator $\iota_{\bar{\ba}}$ by setting
$$
\iota_{\bar{\ba}}=\prod_{j\in\bN_n}(\iota-\overline{a_j}).
$$

\begin{theorem}\label{characf}
A real function $f\in L^2(\bR)$ satisfies (\ref{ftheta}) if and only
if
\begin{equation}\label{characfeq}
(\tau_\ba\hat{f})(\xi)=0,\ \mbox{for all }\xi\le0\mbox{ and
}(\iota_{\bar{\ba}}\hat{f})(\xi)=0,\ \mbox{for all }\xi\ge0.
\end{equation}
\end{theorem}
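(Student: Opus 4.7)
My plan is to derive both conditions in (\ref{characfeq}) from Lemmas \ref{equivalence} and \ref{multiple}, treating the second as a reformulation of the first that uses the realness of $f$.

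First, by Lemma \ref{equivalence} combined with Lemma \ref{multiple}, the equation (\ref{ftheta}) is equivalent to $(\tau_\ba\hat f)(\xi)=0$ for all $\xi\le 0$, which is exactly the first half of (\ref{characfeq}). To obtain the second half I would exploit the Hermitian symmetry $\hat f(-\xi)=\overline{\hat f(\xi)}$ that comes from $f$ being real. A direct computation at the level of a single factor gives
$$\overline{\bigl((\tau-a_j)\hat f\bigr)(-\xi)}=\bigl((\iota-\overline{a_j})\hat f\bigr)(\xi),$$
and iterating over $j\in\bN_n$ yields $\overline{(\tau_\ba\hat f)(-\xi)}=(\iota_{\bar\ba}\hat f)(\xi)$. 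Consequently the first condition, evaluated at $-\xi\le 0$, is equivalent to the second at $\xi\ge 0$, so under the realness assumption the two halves of (\ref{characfeq}) are logically equivalent and the theorem follows.

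The main obstacle, such as it is, is checking the single-factor intertwining identity, which is a two-line calculation from the definitions of $\tau$ and $\iota$. An alternative route, which better mirrors Lemmas \ref{single}--\ref{multiple} and might be preferable if one wants both halves to arise on an equal footing, is to use that $H$ has a real kernel and hence commutes with complex conjugation: conjugating $H(fe^{i\theta_\ba})=-ife^{i\theta_\ba}$ produces $H(fe^{-i\theta_\ba})=+ife^{-i\theta_\ba}$. Repeating the arguments of Lemmas \ref{single} and \ref{multiple} on this dual equation, using the expansion $e^{-i\theta_a(t)}=-\overline a+(1-|a|^2)\sum_{k\in\bN}a^{k-1}e^{-ikt}$ together with the substitutions $\tau\leftrightarrow\iota$ and $a_j\leftrightarrow\overline{a_j}$, then yields the second condition directly.
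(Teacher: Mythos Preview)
Your proposal is correct and follows essentially the same route as the paper, which simply cites Lemmas \ref{equivalence} and \ref{multiple} together with the Hermitian symmetry $\overline{\hat f(\xi)}=\hat f(-\xi)$ for real $f$. Your argument is just a fleshed-out version of that one-line proof; the only point worth tightening is the ``iterating over $j$'' step, since $(\tau-a_1)\hat f$ is no longer Hermitian, so the induction really proceeds via the intertwining relation $\overline{((\tau-a_j)g)(-\xi)}=((\iota-\overline{a_j})h)(\xi)$ whenever $\overline{g(-\xi)}=h(\xi)$, which is exactly what your single-factor computation gives.
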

\begin{proof}
The result follow immediately from Lemmas \ref{equivalence},
\ref{single}, \ref{multiple}, and the fact that for a real function
$f\in L^2(\bR)$, $\overline{\hat{f}(\xi)}=\hat{f}(-\xi)$,
$\xi\in\bR$.
\end{proof}

We next seek a closed form for real functions $f$ satisfying
(\ref{ftheta}).
\begin{theorem}\label{explicit}
Let $\lambda_j$ be the distinct elements in $\ba$ with multiplicity
$n_j$, $1\le j\le k$. Then a real function $f\in L^2(\bR)$ satisfies
(\ref{ftheta}) if and only if there exist functions $c_{jl}\in
L^2(\bR)$ with $\supp(c_{jl})\hat{\,}\subseteq[-1,0]$, $1\le j\le
k$, $1\le l\le n_j$ such that
\begin{equation}\label{closedform}
f(x)=\Re\left(\sum_{j=1}^k\sum_{l=1}^{n_j}\frac{c_{jl}(x)}{(1-\lambda_j
e^{-ix})^l}\right).
\end{equation}
\end{theorem}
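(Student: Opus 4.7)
The plan is to reformulate Theorem \ref{characf} as the statement that $\hat{f}$, restricted to $\xi \le 0$, satisfies the linear difference equation $\prod_{j=1}^k (\tau - \lambda_j)^{n_j} \hat{f}(\xi) = 0$, and then to identify (\ref{closedform}) with the general $L^2$-solution produced by the classical theory of linear recurrences with constant coefficients. A one-line reduction: since $f$ is real, $\hat{f}(-\xi) = \overline{\hat{f}(\xi)}$, and the operator identity $(\iota - \bar\lambda)\hat{f}(\xi) = \overline{((\tau - \lambda)\hat{f})(-\xi)}$ iterates to $\iota_{\bar{\ba}}\hat{f}(\xi) = \overline{\tau_\ba\hat{f}(-\xi)}$, so the second condition in (\ref{characfeq}) is equivalent to the first, and only the half-line condition needs to be analyzed.

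For sufficiency, given $f$ in the form (\ref{closedform}), I expand $(1 - \lambda_j e^{-ix})^{-l} = \sum_{m \ge 0} \binom{m+l-1}{l-1} \lambda_j^m e^{-imx}$ (convergent because $|\lambda_j|<1$), and take the Fourier transform of $g_{jl} := c_{jl}/(1-\lambda_j e^{-ix})^l$ term by term to obtain
$$
\hat{g}_{jl}(\xi) \;=\; \sum_{m=0}^\infty \binom{m+l-1}{l-1} \lambda_j^m\, \hat{c}_{jl}(\xi + m),
$$
which is supported in $(-\infty,0]$ because $\supp\hat{c}_{jl}\subseteq[-1,0]$ pushes the $m$-th summand into $[-1-m,-m]$. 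A short computation using the Pascal identity $\binom{m+l}{l-1}-\binom{m+l-1}{l-1}=\binom{m+l-1}{l-2}$, combined with induction on $l$, shows that $(\tau-\lambda_j)^l\hat{g}_{jl}(\xi)$ collapses to a finite sum of shifts of $\hat{c}_{jl}$ supported in $[0,l-1]$, and hence vanishes a.e.\ on $\xi<0$. Applying the remaining factor $(\tau-\lambda_j)^{n_j-l}$ together with the commuting factors $(\tau-\lambda_{j'})^{n_{j'}}$ for $j'\ne j$ preserves this vanishing, so $\tau_\ba\hat{g}_{jl}=0$ on $(-\infty,0]$; linearity and the conjugate-symmetry reduction then give the result for $f=\Re\sum_{j,l} g_{jl}$.

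For necessity, parametrize $\xi \le 0$ as $\xi = s - m$ with $s \in (-1, 0]$ and $m\in\bZ_+$; for each fixed $s$, evaluating $\tau_\ba\hat{f}(\xi) = 0$ at $\xi = s, s-1, \ldots$ exhibits the sequence $u_m := \hat{f}(s-m)$ as a solution of the scalar linear recurrence with characteristic polynomial $\prod_j(z-\lambda_j)^{n_j}$. The classical theory (distinct roots $\lambda_1,\ldots,\lambda_k$ with multiplicities $n_j$) yields a unique representation
$$
u_m = \sum_{j=1}^k\sum_{l=1}^{n_j} C_{jl}(s)\binom{m+l-1}{l-1}\lambda_j^m,
$$
and the map from initial data $(u_0,\ldots,u_{n-1})$ to $(C_{jl}(s))$ is a fixed invertible linear transformation (a generalized Vandermonde system, invertible by distinctness of the $\lambda_j$). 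I define $\hat{c}_{jl}$ to equal $2 C_{jl}(s)$ on $(-1,0]$ and zero elsewhere (the factor $2$ accounts for $\hat{f}=\tfrac12\sum\hat{g}_{jl}$ on $\xi<0$, since the conjugate half $\overline{\widehat{\sum g_{jl}}(-\cdot)}$ is supported on $[0,\infty)$), verify $\hat{c}_{jl}\in L^2(\bR)$ from the boundedness of the inverse Vandermonde map combined with $\hat{f}\in L^2(\bR)$, and check that $\Re\sum_{j,l} g_{jl}$ reproduces the given $f$ using the conjugate-symmetry identity.

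The main technical obstacle is the rigorous $L^2$ bookkeeping in the necessity step: ensuring that the pointwise recurrence analysis, performed separately for each $s \in (-1,0]$, yields genuinely measurable $L^2$ functions $c_{jl}$. This is handled uniformly in $s$ by the fact that the inverse Vandermonde map has coefficients independent of $s$, so $C_{jl}(\cdot)$ is a finite linear combination of translates of $\hat{f}|_{(-1,0]}$ and therefore lies in $L^2((-1,0])$. The sufficiency side reduces to a clean induction once the Pascal identity is in place.
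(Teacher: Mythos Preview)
Your proposal is correct and follows essentially the same route as the paper: reduce to the half-line difference equation $\tau_\ba\hat{f}=0$ on $\xi\le 0$ via Theorem~\ref{characf}, parametrize $\xi=s-m$ with $s\in(-1,0]$, invoke the general solution of the linear recurrence with characteristic polynomial $\prod_j(z-\lambda_j)^{n_j}$, and identify the resulting expansion with the power-series expansion of $(1-\lambda_je^{-ix})^{-l}$ after inverse Fourier transforming. Your treatment is in fact a bit more careful than the paper's in two respects: you make the conjugate-symmetry reduction $\iota_{\bar\ba}\hat{f}(\xi)=\overline{\tau_\ba\hat{f}(-\xi)}$ explicit, and you address the $L^2$-measurability of the $c_{jl}$ via the fixed inverse Vandermonde map, whereas the paper simply writes down the $e_{jl}$ without comment and disposes of the converse by saying ``reversing the above arguments.''
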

\begin{proof} Assume that $f$ is a real function in $L^2(\bR)$ that
satisfies (\ref{ftheta}). Then by Theorem \ref{characf}, $\hat{f}$
satisfies (\ref{characfeq}). It implies that for each
$\xi\in(-1,0]$, $\hat{f}(\xi-m)$ satisfies the difference equation
\begin{equation}\label{difference}
(\tau_\ba\hat{f})(\xi-m)=0,\ \ m\in\bZ_+.
\end{equation}
By the general solution of a difference equation, there exist
functions $e_{jl}$ on $(-1,0]$, $j\in\bN_k$, $l\in\bN_{n_j}$ such
that
\begin{equation}\label{generalsolution}
\hat{f}(\xi-m)=\sum_{j\in\bN_k}\sum_{l\in\bN_{n_j}}e_{jl}(\xi)\lambda_j^m\prod_{q=1}^{l-1}(m+q),\
\ m\in\bZ_+.
\end{equation}
Since $f$ is real, we have that

\begin{equation}\label{closedformeq1}
\begin{array}{rl}
f(x)&\displaystyle{=\frac1{\sqrt{2\pi}}\int_\bR
\hat{f}(\xi)e^{ix\xi}d\xi=\frac1\pi\Re\biggl(\int_{-\infty}^0\hat{f}(\xi)e^{ix\xi}d\xi\biggr)}\\
&\displaystyle{=\frac1\pi\Re\biggl(\sum_{m=0}^\infty\int_{-m-1}^{-m}\hat{f}(\xi)e^{ix\xi}d\xi\biggr)}\\
&\displaystyle{=\frac1\pi\Re\biggl(\sum_{m=0}^\infty\int_{-1}^{0}\hat{f}(\xi-m)e^{ix(\xi-m)}d\xi\biggr)}
\end{array}
\end{equation}
Now we apply (\ref{generalsolution}) to obtain that
\begin{equation}\label{closedformeq2}
\sum_{m=0}^\infty\int_{-1}^{0}\hat{f}(\xi-m)e^{ix(\xi-m)}d\xi=
\sum_{j\in\bN_k}\sum_{l\in\bN_{n_j}}\int_{-1}^0\biggl(\sum_{m=0}^\infty\lambda_j^me^{-imx}\prod_{q=1}^{l-1}(m+q)\biggr)e_{jl}(\xi)e^{ix\xi}d\xi.
\end{equation}
Recall that
\begin{equation}\label{closedformeq3}
\sum_{m=0}^\infty\lambda_j^me^{-imx}\prod_{q=1}^{l-1}(m+q)=\frac{(l-1)!}{(1-\lambda_je^{-ix})^l},\
\ j\in\bN_k,\ l\in\bN_{n_j}.
\end{equation}
Combining (\ref{closedformeq1}), (\ref{closedformeq2}), and
(\ref{closedformeq3}) yields that
$$
f(x)=\Re\biggl(\sum_{j\in\bN_k}\sum_{l\in\bN_{n_j}}\frac{1}{(1-\lambda_je^{-ix})^l}\frac{(l-1)!}\pi\int_{-1}^0e_{jl}(\xi)e^{ix\xi}d\xi\biggr).
$$
Finally, setting
$$
c_{jl}(x)=\frac{(l-1)!}\pi\int_{-1}^0e_{jl}(\xi)e^{ix\xi}d\xi,\ \
j\in\bN_k,\ l\in\bN_{n_j}
$$
leads to the form (\ref{characfeq}) of $f$.

Conversely, if $f$ os of the form (\ref{characfeq}) then by
reversing the above arguments, we obtain (\ref{generalsolution}).
Thus, for each $\xi\in(-1,0]$, $\hat{f}(\xi-m)$ satisfies the
difference equation (\ref{difference}). Therefore, $f$ satisfies
(\ref{multipleeq}). Since $f$ is real, (\ref{characfeq}) holds true.
By Lemma \ref{characf}, $f$ satisfies (\ref{ftheta}).
\end{proof}

{\small
\bibliographystyle{amsplain}
}
\end{document}